\documentclass[a4,11pt]{article}

\usepackage[a4paper,top=2cm,bottom=2cm,left=2cm,right=2cm]{geometry}
\usepackage[toc]{appendix}
\usepackage{latexsym} 	
\usepackage{amsmath}  
\usepackage{amsthm}	
\usepackage{amsopn}	
\usepackage{amsthm} 
\newtheorem{definition}{Definition}

\usepackage{amscd}	
\usepackage{amssymb} 
\usepackage{array}	
\usepackage{actuarialsymbol} 
\usepackage{mathtools} 
\usepackage[utf8x]{inputenc} 
\usepackage{makeidx} 
\usepackage{multicol} 
\usepackage{multirow} 
\usepackage{caption} 
\usepackage{float} 
\usepackage{enumerate} 
\usepackage{graphicx} 
\usepackage{subfig} 
\usepackage{booktabs} 
\usepackage{soul}  
\usepackage{color}  
\usepackage[italian,english]{babel}  
\usepackage{wasysym}
\usepackage{appendix} 
\usepackage{longtable}
\usepackage{microtype}  \graphicspath{ {Figures/}} 
\usepackage{xcolor}
\usepackage{enumitem}
\usepackage{graphicx}
\usepackage{wrapfig}
\usepackage{lscape}
\usepackage{rotating}
\usepackage{epstopdf}
\usepackage{bm}
\usepackage{adjustbox}
\usepackage{afterpage}
\usepackage[authoryear]{natbib}
\usepackage{hyperref}
\usepackage{soul,xcolor}
\theoremstyle{proposition}
\newtheorem*{proposition}{Proposition}

\title{\textbf{Deepening Lee-Carter for longevity projections with uncertainty estimation}}
\author{Mario Marino\\\normalsize Department of Statistics, Sapienza University of Rome \and Susanna Levantesi\\\normalsize Department of Statistics, Sapienza University of Rome \and Andrea Nigri\\\normalsize Department of Statistics, Sapienza University of Rome}

\date{\today}

\begin{document}

	\maketitle

\begin{abstract}
Undoubtedly, several countries worldwide endure to experience a continuous increase in life expectancy, extending the challenges of life actuaries and demographers in forecasting mortality. Although several stochastic mortality models have been proposed in past literature, the mortality forecasting research remains a crucial task. Recently, various research works encourage the adequacy of deep learning models to extrapolate suitable pattern within mortality data. Such a learning models allow to achieve accurate point predictions, albeit also uncertainty measures are necessary to support both model estimates reliability and risk evaluations. To the best of our knowledge, machine and deep learning literature in mortality forecasting lack for studies about uncertainty estimation.
As new advance in mortality forecasting, we formalizes the deep Neural Networks integration within the Lee-Carter framework, posing a first bridge between the deep learning and the mortality density forecasts. We test our model proposal in a numerical application considering three representative countries worldwide and both genders, scrutinizing two different fitting periods. Exploiting the meaning of both biological reasonableness and plausibility of forecasts, as well as performance metrics, our findings confirm the suitability of deep learning models to improve the predictive capacity of the Lee-Carter model, providing more reliable mortality boundaries also on the long-run. 
\\ \vspace{0.2 cm}
\end{abstract}
\begin{flushleft}
	\textbf{Keywords}: Mortality forecasting, Lee-Carter model, Deep Neural Networks, Prediction Interval, Uncertainty \\ 
	\vspace{0.3 cm}
\end{flushleft}

\section{Introduction}
\label{sec:introduction}
Since the second half of the 20th century, mortality has exhibited notable improvements engaging attention from life insurers and pension systems, as well as from actuarial and demographic researchers. Principally, mortality reductions in modern populations arise by virtue of a continuous flow of social progresses (\cite{oeppen}). In fact, industrialized countries demonstrated progress in socio-economic dynamics, health systems and lifestyles, impacting on how mortality will vary in the future. 
Various factors move human longevity trends and different mortality scenarios should be anticipated through predictive analysis. The need of accurate forecasting to address longevity risk and adequately pricing the annuities products has led actuaries towards more sophisticated extrapolative methods, in a stochastic environment, see for instance \cite{LeeCarter92}, \cite{Brouhns}, \cite{RenshawHaberman06}, \cite{Cairns2006}, \cite{booth2008}, \cite{Cairns2009}, \cite{plat}, \cite{hunt2014} and \cite{currie2016}.

Demographers and actuaries have concentrated their efforts on the model functional form and its parametrization in order to better explain the mortality structure. In most of these models, mortality projections arise from time-dependent parameters, modelled by time series analysis techniques, the class of ARIMA processes among all. However, alternative mortality forecasting methods have been suggested in past literature. For instance, a P-spline based approach is proposed in \cite{currie2004}, where forthcoming values are interpreted as missing-value findable by smoothing procedures. A development of this model is presented in \cite{camarda}, overcoming robustness forecasting problems. An innovative proposal has been introduced in \cite{mitchell}, wherein the Lee-Carter (henceforth LC) time-index is predicted through a Normal Inverse Gaussian distribution, attaining accuracy in the approximation of the observed force of mortality. Furthermore, new advances in mortality modelling, grounded in the machine and deep learning models, has recently appeared in literature. The first insight based on machine learning tools is offered in \cite{deprez}, where regression trees algorithms are adopted to improve the estimation of death rates from canonical models, such as the LC and the Renshaw-Haberman one.  
These findings are extended in \cite{levpizzo} and \cite{levnigri} for predictive purposes. 
A Neural Network (henceforth NN) design for mortality analysis is initially scrutinized by \cite{hainaut}, profitably aiming to extrapolate suitable non-linearities in the observed force of mortality. A NN vision within the LC framework is presented in \cite{NLMSP2019}, \cite{perla} and \cite{richwut}. The former employs a Recurrent NN architecture, namely Long Short-Term Memory (henceforth LSTM), to model the future LC time-dependent parameter values. For each country investigated and both genders, numerical experiments performed confirm greater LSTM accuracy w.r.t. the best ARIMA process. The latter proposes a NN representation for the multi-population LC model, overcoming parameters optimization problems and achieving reliable forecasting performances. Following this wake, \cite{perla} takes the moves showing the remarkable accuracy achieved in a large-scale prediction of mortality. In particular, different NN structures are tested, such as the LSTM and the convolutional NN, engaging each of them to produce point forecasts of mortality rates simultaneously for many countries. Given the recurrent networks suitability for forecasting purposes, also \cite{NLM2020} consider a LSTM model to predict both life expectancy and lifespan disparity measure, concerning various countries and both genders. 

Deep learning models, especially recurrent NNs, are gaining confidence in many forecasting tasks, as well as in mortality. They are dynamic systems stemming from composition and superposition of non-linear functions, earning notable accuracy gains in predictive issues. Wanting to exploit the latter feature, we aim to investigate the suitability of deep NNs models within the LC framework to extrapolate the future mortality realizations.  
Contextualizing suggestions expressed in \cite{M4}, our approach pursues a models integrating deep learning techniques in the spirit of \cite{NLMSP2019}, representing an appropriate compromise between interpretation of the mortality model and high accuracy in projections. Therefore, we freeze the LC age-period mortality representation, forecasting the mortality profile employing a deep NN models.

It is worth to recall that a proper forecasting model provides robust point predictions, outlining the future  mortality trend, as well as confidence ranges of variability. Uncertainty measures associated with the expected values are necessary to sufficiently inspect the phenomenon and, at the same time, to judge both the model adequacy and the results reliability. As in actuarial assessments, uncertainty measures, such as prediction intervals, are imperative. This is a compelling topic, since learning models such as NNs furnish only point predictions. 
To this purposes, \cite{khosravi} provided an extensive methodological review of the main approaches for calculating confidence and prediction intervals, concluding that no method beats the other ones in each considered comparison metric. Anyhow, procedures based on structural  assumptions, such as the Delta method (\cite{wild}), the Mean-Variance Estimation (\cite{nix}) and the Bayesian approach (\cite{mackay}), are relevant solutions but suffering computational troubles that could be prohibitive. At the state of the art, the prevailing approach to forecast prediction intervals for NNs is based on coherent sampling techniques, favouring the estimation of a theoretical probability distribution through an empirical one, see for instance \cite{Tibsh}, \cite{Hesk}, \cite{khosravi}, \cite{Mazloumi}, \cite{Kasiviswanathan}, \cite{khosravi2015} and \cite{lietal}. In particular, bootstrap procedures seem to represent the more tempting alternative since they do not require stringent sampling assumptions, allowing for accurate plug-in estimates (\cite{efron}). In fact, such an approach have become a common practice to measure uncertainty in stochastic mortality models, as emerged in \cite{Brouhns2005}, \cite{Koissi}, \cite{li2009}, \cite{damato2011}, \cite{damato2012a} and \cite{damato2012b}. However, to the best of our knowledge, machine and deep learning literature in mortality forecasting lack for studies about uncertainty estimation.

As new advance in mortality forecasting, the present work formalizes the integration of deep learning techniques in the LC model framework, in terms of both point estimates and prediction intervals for future mortality rates. In doing so, we refer to a recurrent NN with LSTM architecture to forecast the LC time-index. The resulting integrated model, namely LC-LSTM, fill the gap between the deep learning integrated mortality models and the uncertainty estimation, getting suitable ranges of variability.We test the proposed model in a numerical application considering three countries worldwide, Australia, Japan and Spain, for both genders scrutinizing two different learning periods to deepen how they could affect the forecasting performances. The obtained results are assessed considering both qualitative and quantitative criteria. The former are well-established in \cite{Cairns2011}, concerning: (a) the biological reasonableness of mortality forecasts; (b) the plausibility of projected uncertainty at different ages; (c) the predictions robustness w.r.t. the historical mortality trend. Quantitative criteria, as performance metrics, are considered to judge achieved mortality forecasts with a backtesting approach. Our findings confirm the LC-LSTM suitability to produce plausible mortality projections, improving the LC predictive capacity, in particular in the long-run. The proposed framework might represent a prominent practice in the field of longevity forecasting, as for actuarial business tasks. 

The reminder of the paper is structured as follow. Section \ref{sec:NN} presents the RNN model with LSTM architecture. Section \ref{sec:LC_LSTM} illustrates the LC-LSTM model formalization. Section \ref{sec:pred_interval} discusses the uncertainty framework within the LC-LSTM, highlighting the methodology to estimate prediction intervals. Section \ref{sec:metriche} describes the suitable performance metrics to evaluate both point and interval forecasts. Section \ref{sec:numerical_application} reports the results, and related comments, from the LC-LSTM application to mortality data. Finally, Section \ref{sec:conclusioni} poses the conclusions.   

\bigskip

\section{The Neural Network model}
\label{sec:NN}

A NN model is a computational graph consisting of connected nodes, the neurons, located in consecutive layers. 
Connections among neurons are pondered by parameters, whose values are learned from data implementing efficient optimization procedures (\cite{rum}). Each neuron receives weighted information, namely activation, and transforms it employing a differentiable function, the activation function. As a consequence, NN outputs descend from composition and superposition of differentiable functions, providing flexible data-driven tools that deeply gather data features and generalize them. 

For forecasting purposes, specific NNs models, namely Recurrent Neural Networks (henceforth RNNs), are used to handle sequential data such as time series. 
In RNNs recurrent connections between neurons are added, so that the network processes data creating a dynamic memory. However, RNNs learning optimization could be tricky because of the vanishing or exploding gradient problems (\cite{pascanu}). 
To address such a problem, \cite{hocreiter} proposed the LSTM architecture, whose more engineered structure relies both on a memory block and gates, essentials to control data elaborations. In the following, we will consider the RNN with LSTM architecture, referring the interested reader to \cite{bengio}, \cite{aggarwal} and references therein for further details on RNNs and LSTM. 

\subsection{RNNs with LSTM architecture}
\label{sec:lstm_model}
In order to define the general structure of the RNN with LSTM architecture, let $N_0$ be the number of neurons within the input layer, $N_p$ the number of neurons of the $p^{th}$ hidden layer with $p \in \left\lbrace1,\ldots,P\right\rbrace$, and $N_{P+1}$ the number of neurons of the output layer. We have $P, N_0, N_p, N_{P+1} \in \mathbb{N}$.
Let $A^{(p)}: \mathbb{R}^{N_{p-1}} \to \mathbb{R}^{N_{p}}$ be an affine map defining the $p^{th}$ hidden layer activation, given the output produced by the $(p-1)^{th}$ hidden layer, and let $\phi : \mathbb{R}^{N_{p}} \to \mathbb{R}^{N_{p}}$ be a differentiable activation function.
%
\begin{definition}
	The output of a LSTM neuron at time $t$ in the $p^{th}$ hidden layer is:
	\begin{equation}
	\quad  H_t^{(p)} = \bm{o}_t^{(p)} \odot \text{tanh}\left(\bm{c}_t^{(p)}\right),
	\end{equation}
	where the components of the element-wise product stem from the LSTM neuron internal forward flow described by the following equations:
	\begin{equation}
	\label{eq:LSTM}
	\begin{aligned}
	\text{Forget gate}:&  \quad   \bm{f}_t^{(p)} = \sigma_f \circ A^{(p)} = \sigma \left(\langle\bm{W}_f^{(p)},H_t^{(p-1)}\rangle+\langle\bm{U}_f^{(p)},H_{t-1}^{(p)}\rangle+\bm{b}_f^{(p)}\right), \\
	\text{Input gate}:&  \quad  \bm{i}_t^{(p)} = \sigma_i \circ A^{(p)} = \sigma\left(\langle\bm{W}_i^{(p)},H_t^{(p-1)}\rangle+\langle\bm{U}_i^{(p)},H_{t-1}^{(p)}\rangle+\bm{b}_i^{(p)}\right),\\
	\text{Output gate}:&  \quad  \bm{o}_t^{(p)} = \sigma_o \circ A^{(p)} = \sigma\left(\langle\bm{W}_o^{(p)},H_t^{(p-1)}\rangle+\langle\bm{U}_o^{(p)},H_{t-1}^{(p)}\rangle+\bm{b}_o^{(p)}\right), \\
	\text{Memory state}:&  \quad  \bm{c}_t^{(p)} = \bm{f}_t^{(p)} \odot \bm{c}_{t-1}^{(p)} + \bm{i}_t^{(p)} \odot \text{tanh} \left(\langle\bm{W}_c^{(p)},H_t^{(p-1)}\rangle+\langle\bm{U}_c^{(p)},H_{t-1}^{(p)}\rangle+\bm{b}_c^{(p)}\right),\\
	\end{aligned}
	\end{equation}
	where $\sigma(x)=\left(1+e^x\right)^{-1}$ is the sigmoid function, $tanh(x)=\left(e^x-e^{-x}\right)\left(e^x+e^{-x}\right)^{-1}$ is the hyperbolic tangent function, $\left(\bm{W}_l^{(p)}, l=f,i,o,c \right)$ are the weight matrices for gates feedforward connections,  $\left(\bm{U}_l^{(p)}, l=f,i,o,c\right)$ are the weight matrices for gates recurrent connections and $\left( \bm{b}_l^{(p)}, l=f,i,o,c \right)$ are the bias terms. \vspace{0.2 cm}
\end{definition}
\begin{definition}
	Let $\mathcal{D}=\left\lbrace\left(\bm{x}_t,\bm{y}_t\right), \bm{x}_t \in \mathbb{R}^{N_0}, \bm{y}_t \in \mathbb{R}^{N_{P+1}} \right\rbrace$ be a dataset wherein $\bm{x}_t$ is the input variable at time $t$ and $\bm{y}_t$ the associated response. A RNN with LSTM architecture is a function $f_{LSTM}:\mathbb{R}^{N_0}\to\mathbb{R}^{N_{P+1}}$ such that:
	\begin{equation}
    \label{eq:def_LSTM}
    \bm{y}_t=f_{LSTM}\left(\bm{x}_t;\bm{\mathcal{W}}\right) + \gamma_t=\psi \circ\left(H_t^{(P)}\circ H_t^{(P-1)}\circ \cdots \circ H_t^{(1)}\right)\left(\bm{x}_t;\bm{\mathcal{W}}\right)+ \gamma_t,
	\end{equation}
	where $\psi:\mathbb{R}^{N_P}\to\mathbb{R}^{N_{P+1}}$ is the output layer activation function, $\bm{\mathcal{W}}$ is the set of all NN parameters and $\gamma_t$ is a noise term, having zero mean and variance $\sigma^2_{\gamma}$.
\end{definition}

Starting from eq.(\ref{eq:def_LSTM}), we pose a bridge between deep learning and mortality forecasting. Favouring the LC model as the bridge's structure, the following Sections formalize the resulting integrated mortality model, both in terms of point and interval estimates.

\bigskip

\section{The LC-LSTM model}
\label{sec:LC_LSTM}
Let us consider the LC Poisson model proposed in \cite{Brouhns} as the reference model describing the behavior of the age-period mortality rates. Hence, for ages $x \in \mathcal{X}=\left\lbrace0,1,\ldots,\omega \right\rbrace$ and calendar years $t \in \mathcal{T}=\left\lbrace t_0, t_1,\ldots, t_n \right\rbrace$, we assume that the observed number of deaths, $D_{x,t}$, follows a Poisson distribution:
\begin{equation}
\label{eq:hp_decessi}
D_{x,t} \sim Poi(E^c_{x,t}m_{x,t}), \\
\end{equation}
where $E^c_{x,t}$ is the central exposure to the death risk and $m_{x,t}=\mathbb{E}\left(\frac{D_{x,t}}{E^c_{x,t}}\right)$ is the central death rate. The equation defining the LC model structure associated to the assumption (\ref{eq:hp_decessi}) (\cite{currie2016}) is:
\begin{equation}
\label{eq:predictor_LC}
\log m_{x,t}=\alpha_x+\beta_x k_t, \\
\end{equation}
where $\alpha_x$ and $\beta_x$ are age-dependent parameters illustrating the mortality age pattern and $k_t$ is a time-index parameter representing the mortality behaviour over time. As is well-known, parameters constraints must be satisfied to ensure model identification, i.e. $\sum_{t=t_0}^{t_n} k_t =0$ and $\sum_{x=0}^{\omega}b_x=1$.  

Let $\bm{\kappa}_\mathcal{T}=(k_{t-j})_{t\in \mathcal{T}}$ be the vector of the time lagged $k_t$, being $j\in \mathbb{N}$ the time lag.
According to eq.(\ref{eq:def_LSTM}), we model the LC time-index as below:
\begin{equation}
\label{eq:parametric_eq_lstm}
k_t=f_{LSTM}\left(\bm{\kappa}_\mathcal{T}; \bm{\mathcal{W}}\right)+ \gamma_t=\psi \circ\left(H_t^{(P)}\circ H_t^{(P-1)}\circ \cdots \circ H_t^{(1)}\right)\left(\bm{\kappa}_\mathcal{T}; \bm{\mathcal{W}}\right)+ \gamma_t.
\end{equation}   
Integrating eq.(\ref{eq:parametric_eq_lstm}) within the LC structure in eq.(\ref{eq:predictor_LC}), the LSTM will act as a predictor over the forecasting horizon $\mathcal{T'}=\left\lbrace t_{n}+1, t_{n}+2,\ldots, t_{n}+s \right\rbrace$, and the LC-LSTM model expression is: 
\begin{equation}
\label{eq:linear_predictor_LC_LSTM}
\log m_{x,t}=\hat{\alpha}_x+\hat{\beta}_x\left(f_{LSTM}\left(\bm{\kappa}_\mathcal{T'}; \bm{\mathcal{W}}\right)+ \gamma_t\right), \quad \forall t\in \mathcal{T'},
\end{equation}
with $\hat{\alpha}_x$ and $\hat{\beta}_x$ the estimates of age-dependent parameters. 

The meaning of the proposed model integration is the following. As the mortality dynamic over time stems from a continuous evolution of various social and demographic factors, a coherent mortality profile investigation suggests an autoregressive approach to the time-index modeling. From a general perspective, the LC time-index values should be interpreted as the realization of the following process:
\begin{equation}
\label{eq:kt_generica}
k_t=\varphi\left(\bm{\kappa}_\mathcal{T}\right)+\gamma_t, \quad \forall t\in \mathcal{T},
\end{equation}
where the unknow function $\varphi:\mathbb{R}^{(t-j)} \to \mathbb{R}$  maps the vector $\bm{\kappa}_\mathcal{T}$ to $k_t$ over the time horizon $\mathcal{T}$, unless the noise component.
Referring to the RNNs universal functional approximation property (\cite{schafer}), the proposed model integration allows to resemble the unknown map $\varphi\left(\bm{\kappa}_\mathcal{T}\right)$ through a RNN with LSTM architecture, whose functional form is shaped according to the available time-index history. 
As the RNN model approximates the map $\varphi\left(\bm{\kappa}_\mathcal{T}\right)$, it also defines the mean of response variable conditioned to the explicative ones (\cite{bishop}), that is: 
\begin{equation}
\label{eq:stima_kt_lstm}
\hat{k}_t=\hat{f}_{LSTM}\left(\bm{\kappa}_\mathcal{T};\hat{\bm{\mathcal{W}}}\right) = \mathbb{E}\left(k_t | \bm{\kappa}_\mathcal{T}\right),
\end{equation}
where $\hat{f}_{LSTM}$ is the fitted function composition and $\hat{\bm{\mathcal{W}}}$ is the NN parameters estimate. Such a relation highlights that the LSTM model captures the LC time-index conditional expectation. Therefore, the LC-LSTM model provides the following point predictions:
\begin{equation}
\label{eq:point_prediction_LC_LSTM}
\log \hat{m}_{x,t}=\mathbb{E}\left(\log m_{x,t}\right)=\hat{\alpha}_x+\hat{\beta}_x\hat{f}_{LSTM}\left(\bm{\kappa}_\mathcal{T'}; \hat{\bm{\mathcal{W}}}\right), \quad \forall t\in \mathcal{T'}.
\end{equation}   
However, point predictions do not describe the uncertainty arising from the estimates of mortality rates. 
Therefore, a methodology for building prediction intervals are necessary in order to provide a measure of prediction uncertainty.

\section{Prediction intervals for the LC-LSTM model}
\label{sec:pred_interval}
Prediction intervals (henceforth PI) outline a probabilistic range suitable to incorporate various forecasting scenarios, then probing uncertainty on the future mortality realizations. Stochastic mortality models forecast PIs, whose estimates act as uncertainty measure linked to the expected future mortality(see for instance \cite{booth2008}, \cite{Cairns2009}, \cite{Cairns2011}, \cite{dowd}). Thus, in a proper forecasting process PIs are meaningful in supporting both risk evaluations and the model estimates reliability. Referring to NNs, PIs construction is a challenging task because of different uncertainty sources impact on the learning process, then conditioning the NN generalization performances.   
By a broad perspective, NNs models are exposed to a learning uncertainty, depending both on the data and the NN functioning.
Since the data employed in the learning process are a realization of an underlying stochastic process, a training data uncertainty looms. Indeed, varying input could involve in distinct function compositions. In addition, a variability could arise due to the optimization procedures necessary to learn NN parameters values from data. As the cost function could exhibit many local minima, the NN parameters take on different values entailing variability in estimates. In this case, a parameter uncertainty emerges. Nevertheless, also a model uncertainty could occur for possible structural model misspecification.

Addressing the measurement of uncertainty sources separately is a complex problem, as they are closely connected and no information is available about the input-output relation. However, PIs account for all uncertainty sources, embracing the overall variability around NN point predictions. Therefore, we proceed to define PIs for the LC-LSTM mortality rates in order to estimate the total uncertainty produced by the model integration. 

Recalling that age-dependent parameters are time invariant, the uncertainty in death rates concerns the temporal dynamic described by eq.(\ref{eq:parametric_eq_lstm}). Thus, we focus on the PI construction for the time-index, exploiting the $k_t$ total variance, $\sigma^2_{k_t}$. To this end, the PI characterization is based on the following result. 

\begin{proposition}	
Let $\left( k_t\right)_{t \in \mathcal{T'}}$ the time-index series over the forecast horizon $\mathcal{T'}$. The total variance associated to the time-index value is:
\begin{equation}
	\label{eq:varianza_kt}
	\sigma^2_{k_t} =\sigma^2_{\hat{k}_t}+\sigma^2_{\gamma}+\mathbb{E}\left[BIAS\left(\hat{k}_t \big| \bm{\kappa}_{\mathcal{T'}}\right)^2\right],
\end{equation}

where $BIAS\left(\hat{k}_t \big| \bm{\kappa}_{\mathcal{T'}}\right) = \mathbb{E}\left(\varphi\left(\bm{\kappa}_{\mathcal{T'}}\right) - \hat{k}_t \big | \bm{\kappa}_{\mathcal{T'}}\right)$ and $\sigma^2_{\hat{k}_t}$ is the NN output variance.  \vspace{0.2 cm}
\end{proposition}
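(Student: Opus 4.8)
The plan is to read $\sigma^2_{k_t}$ not as a literal variance of $k_t$ but as the mean squared prediction error $\mathbb{E}\!\left[(k_t-\hat{k}_t)^2\right]$, i.e. the quantity that governs how wide a PI around the point forecast $\hat{k}_t$ must be, and then to obtain eq.~(\ref{eq:varianza_kt}) as a two-stage orthogonal (bias--variance) decomposition. Throughout I would condition on the input $\bm{\kappa}_{\mathcal{T'}}$ and take expectations over the remaining randomness, namely the forecast-point noise $\gamma_t$, the training sample, and the optimization that produces $\hat{\bm{\mathcal{W}}}$; a final outer expectation over $\bm{\kappa}_{\mathcal{T'}}$ then turns the conditional squared bias into $\mathbb{E}\!\left[BIAS(\hat{k}_t\mid\bm{\kappa}_{\mathcal{T'}})^2\right]$. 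The starting point is eq.~(\ref{eq:kt_generica}) written on $\mathcal{T'}$, $k_t=\varphi(\bm{\kappa}_{\mathcal{T'}})+\gamma_t$, together with $\hat{k}_t=\hat{f}_{LSTM}(\bm{\kappa}_{\mathcal{T'}};\hat{\bm{\mathcal{W}}})$; I would split the error into an approximation part and the irreducible noise, writing $k_t-\hat{k}_t=\bigl(\varphi(\bm{\kappa}_{\mathcal{T'}})-\hat{k}_t\bigr)+\gamma_t$ and abbreviating $\varphi=\varphi(\bm{\kappa}_{\mathcal{T'}})$.

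First stage: I square this and take conditional expectations,
\begin{align*}
\mathbb{E}\!\left[(k_t-\hat{k}_t)^2\mid\bm{\kappa}_{\mathcal{T'}}\right]
&=\mathbb{E}\!\left[(\varphi-\hat{k}_t)^2\mid\bm{\kappa}_{\mathcal{T'}}\right]\\
&\quad+2\,\mathbb{E}\!\left[(\varphi-\hat{k}_t)\,\gamma_t\mid\bm{\kappa}_{\mathcal{T'}}\right]
+\mathbb{E}\!\left[\gamma_t^2\mid\bm{\kappa}_{\mathcal{T'}}\right].
\end{align*}
Here I would use that $\gamma_t$ has zero mean and variance $\sigma^2_{\gamma}$ (as specified for the noise term in eq.~(\ref{eq:def_LSTM})) and, crucially, that the noise realized at the forecast point is independent of the fitted predictor $\hat{k}_t$; this annihilates the cross term and reduces the last term to $\sigma^2_{\gamma}$, leaving $\mathbb{E}\!\left[(\varphi-\hat{k}_t)^2\mid\bm{\kappa}_{\mathcal{T'}}\right]+\sigma^2_{\gamma}$.

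Second stage: into the surviving term I insert $\pm\,\mathbb{E}[\hat{k}_t\mid\bm{\kappa}_{\mathcal{T'}}]$ and expand. Since $\varphi$ is deterministic given $\bm{\kappa}_{\mathcal{T'}}$, the fluctuation $\hat{k}_t-\mathbb{E}[\hat{k}_t\mid\bm{\kappa}_{\mathcal{T'}}]$ has zero conditional mean, so its cross product with the deterministic quantity $\varphi-\mathbb{E}[\hat{k}_t\mid\bm{\kappa}_{\mathcal{T'}}]$ vanishes in expectation. What remains are exactly the NN output variance $\sigma^2_{\hat{k}_t}=\mathbb{E}[(\hat{k}_t-\mathbb{E}[\hat{k}_t\mid\bm{\kappa}_{\mathcal{T'}}])^2\mid\bm{\kappa}_{\mathcal{T'}}]$ and the squared bias $BIAS(\hat{k}_t\mid\bm{\kappa}_{\mathcal{T'}})^2$, the latter matching the stated definition because $BIAS(\hat{k}_t\mid\bm{\kappa}_{\mathcal{T'}})=\mathbb{E}(\varphi-\hat{k}_t\mid\bm{\kappa}_{\mathcal{T'}})=\varphi-\mathbb{E}[\hat{k}_t\mid\bm{\kappa}_{\mathcal{T'}}]$. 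Collecting both stages and taking the outer expectation over $\bm{\kappa}_{\mathcal{T'}}$ yields eq.~(\ref{eq:varianza_kt}).

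The genuinely load-bearing step, and the one I would be most careful about, is the vanishing of the cross term $\mathbb{E}[(\varphi-\hat{k}_t)\,\gamma_t\mid\bm{\kappa}_{\mathcal{T'}}]$: this is not automatic but rests on the modeling assumption that the out-of-sample noise $\gamma_t$ is independent of the data-driven estimator $\hat{k}_t$, which carries only training-sample and optimization randomness. I would state this independence explicitly, since without it the three terms do not separate cleanly. The only other things worth flagging are the interpretation of $\sigma^2_{k_t}$ as a mean squared error rather than a literal variance of $k_t$, and the bookkeeping of which source of randomness each expectation integrates over; once these are pinned down, both splits are the standard orthogonality argument and require no further computation.
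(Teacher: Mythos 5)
Your argument is mathematically sound, but it is a genuinely different route from the paper's, and it proves a (deliberately) reinterpreted statement. The paper takes $\sigma^2_{k_t}$ literally as $\mathrm{Var}(k_t)=\mathbb{E}\left[\left(k_t-\mathbb{E}(k_t)\right)^2\right]$: it first establishes $\mathbb{E}(k_t)=\mathbb{E}(\hat{k}_t)$ via the tower property and eq.~(\ref{eq:stima_kt_lstm}), then inserts $\pm\hat{k}_t$ and kills the resulting cross term by \emph{assuming} stochastic independence between the prediction error $\left(k_t-\hat{k}_t\right)$ and the fluctuation $\left(\hat{k}_t-\mathbb{E}(k_t)\right)$, which yields $\sigma^2_{k_t}=\mathbb{E}\left[\left(k_t-\hat{k}_t\right)^2\right]+\sigma^2_{\hat{k}_t}$ with $\sigma^2_{\hat{k}_t}$ the \emph{unconditional} variance of the NN output; only afterwards does it decompose the MSEP by conditioning — essentially your first stage — into $\mathbb{E}\left[BIAS^2\right]+\sigma^2_\gamma$, with no conditional-variance term because $\hat{k}_t$ is there implicitly treated as deterministic given $\bm{\kappa}_{\mathcal{T'}}$. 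You instead declare $\sigma^2_{k_t}$ to be the MSEP itself and recover the NN output variance \emph{inside} the conditional decomposition, as the expected conditional variance $\mathbb{E}\left[\mathrm{Var}\left(\hat{k}_t \mid \bm{\kappa}_{\mathcal{T'}}\right)\right]$ carried by training-sample and optimization randomness. What each buys: your route needs only the natural assumption $\gamma_t\perp\hat{k}_t$ (which, note, the paper also uses tacitly and without acknowledgment when it drops the cross term in eq.~(\ref{eq:msep})), it dispenses with the paper's extra error/fluctuation independence assumption and the mean-matching identity, and your $\sigma^2_{\hat{k}_t}$ is exactly what the bootstrap ensemble of Section~\ref{section:variance_boot} actually estimates — the spread of the $\hat{k}_t^{(b)}$ at fixed input; the paper's route, in exchange, delivers a decomposition of the variance of the future realization $k_t$ itself, which is what the proposition's wording announces and what a prediction interval centred at $\mathbb{E}(k_t)$ notionally requires. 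So be aware that the two derivations attach different meanings both to the left-hand side and to $\sigma^2_{\hat{k}_t}$ (unconditional versus expected conditional variance) and coincide only at the level of the printed three-term formula; since the phrase ``total variance associated to the time-index value'' does not pin down the interpretation, your proof stands as a legitimate reading of the statement rather than a reproduction of the paper's argument, and your insistence on stating the $\gamma_t\perp\hat{k}_t$ assumption explicitly is an improvement on the original.
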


\begin{proof}
	Recalling eq.(\ref{eq:stima_kt_lstm}), over the forecasting horizon is straightforward noting that \[\mathbb{E}\left(k_t\right)=\mathbb{E}\left[\mathbb{E}\left(k_t \big | \bm{\kappa}_{\mathcal{T'}}\right)\right] = \mathbb{E}\left(\hat{k}_t\right).\] We proceed to define the time-index variance by direct calculation: 
	\begin{equation}
		\label{eq:variance_proof_v1}
		\begin{aligned}
		\sigma^2_{k_t} & = \mathbb{E}\left[\left(k_t-\mathbb{E}(k_t)\right)^2\right] =  \mathbb{E}\left[\left(k_t-\mathbb{E}(k_t)+\hat{k}_t-\hat{k}_t\right)^2\right] = \\
		& = \mathbb{E}\left[\left(k_t-\hat{k}_t\right)^2\right]+\mathbb{E}\left[\left(\hat{k}_t-\mathbb{E}\left(\hat{k}_t\right)\right)^2\right] +2 \mathbb{E}\left[\left(k_t-\hat{k_t}\right)\left(\hat{k}_t-\mathbb{E}(k_t)\right)\right] \\
		\end{aligned}
	\end{equation}

Assuming a reasonable stochastic independence between $\left(k_t-\hat{k}_t\right)$ and $\left(\hat{k}_t-\mathbb{E}(k_t)\right)$, follows that:

\begin{equation}
\label{eq:variance_proof}
\sigma^2_{k_t} = \mathbb{E}\left[\left(k_t-\hat{k}_t\right)^2\right]+ \sigma^2_{\hat{k}_t}
\end{equation}

The term $\mathbb{E}\left[\left(k_t-\hat{k}_t\right)^2\right]$ identifies the mean squared error of prediction associated to $\hat{k}_t$, whose expression can be developed as below: 

	\begin{equation}
		\label{eq:msep}
		\begin{aligned}
		\mathbb{E}\left[\left(k_t-\hat{k}_t\right)^2\right] & = \mathbb{E}\left[\mathbb{E}\left[\left(k_t-\hat{k}_t\right)^2 \bigg | \bm{\kappa}_{\mathcal{T'}}\right]\right] = \mathbb{E}\left[\mathbb{E}\left[\left(\varphi\left(\bm{\kappa}_{\mathcal{T'}}\right)+\gamma_t-\hat{k}_t\right)^2 \bigg | \bm{\kappa}_{\mathcal{T'}} \right]\right] = \\
		& = \mathbb{E}\left[\mathbb{E}\left[\left(\varphi\left(\bm{\kappa}_{\mathcal{T'}}\right)-\hat{k}_t\right)^2 \bigg | \bm{\kappa}_{\mathcal{T'}} \right]\right] + \mathbb{E}\left[\mathbb{E}\left[\gamma_t^2 \bigg | \bm{\kappa}_{\mathcal{T'}} \right]\right] = \\
		& = \mathbb{E}\left[BIAS\left(\hat{k}_t \big| \bm{\kappa}_{\mathcal{T'}}\right)^2\right] + \sigma^2_{\gamma}. \\
		\end{aligned}
	\end{equation}
	
	Substistuting eq.(\ref{eq:msep}) in eq.(\ref{eq:variance_proof}), we have:
	
	\begin{equation}
		\label{eq:variance_proof_final}
		\sigma^2_{k_t} = 
		 \sigma^2_{\hat{k}_t}+\sigma^2_{\gamma}+\mathbb{E}\left[BIAS\left(\hat{k}_t \big | \bm{\kappa}_{\mathcal{T'}}\right)^2\right], 
	\end{equation}
	completing the proof.
\end{proof}

Following eq.(\ref{eq:varianza_kt}), uncertainty in the future mortality behavior is linked to the NN model. The NN ability to approximate data depends on the function composition extent, which is intrinsically related to the learning process. Hence, $\sigma^2_{\hat{k}_t}$ includes fluctuations due to training data and learned weights, as well as from model misspecifications occurencies. In compliance to the bias-variance principle, an expected bias component is present. In fact, both bias and variance contribute to the NN prediction error and the NN model suitability is based on the reduction of both. Finally, the variance $\sigma^2_{\gamma}$ constitutes an irreducible term of uncertainty, since it refers to the random noise component. 

\subsection{Estimating $\sigma^2_{\hat{k}_t}$}
\label{section:variance_boot}

To derive the NN output variance, the conditioned time-index distribution, $\mathbb{P}\left(\hat{k}_t \big | \bm{\kappa}_{\mathcal{T'}}\right)$, should be known. However, it is not available and we could either hypothesize some distribution or extract it from the data grasped. Considering the latter, our approach to estimate the time-index variance refers to the NN ensemble paradigm, based on the jointly use of multiple NNs (\cite{zhou}). Utilizing a bootstrap technique, multiple training data samples are generated in order to develop an empirical distribution, $\hat{\mathbb{P}}\left(\hat{k}_t \big | \bm{\kappa}_{\mathcal{T'}}\right)$, constitutes by different NN point predictions. The final estimates are then obtained aggregating, by average, the various outputs. The latter procedure, namely bootstrap aggregating or bagging (\cite{breiman}), is an ensemble techniques producing an unbiased estimation and favouring an adequate variance measurement. This means that the expected bias in eq.(\ref{eq:varianza_kt}) is seen as a negligible component affecting the time-index variance (see for instance \cite{khosravi2015}). The bagging scheme proposed in the present work is described in the following steps:
\begin{enumerate}[labelindent=20pt,labelwidth=\widthof{\ref{last-item}},label=\arabic*.,itemindent=1em,leftmargin=!]
\item [\textit{Step 1.}] Using the available time-index series $\bm{\kappa}_\mathcal{T}$, we train the LSTM model to obtain the point estimates in eq.(\ref{eq:stima_kt_lstm}) over the forecast horizon $\mathcal{T'}$;
\item [\textit{Step 2.}] We generate $B \in \mathbb{N}$ samples of $\bm{\kappa}_\mathcal{T}$ through a proper bootstrap procedure. In particular, we refer to the bootstrap strategy proposed in \cite{Koissi};
\item [\textit{Step 3.}] For each $b^{th}$ sample, with $b=1,\ldots,B$, we re-optimize the weights of the function composition defined in \textit{Step 1}. In doing so, only the NN weights will change given the new data and the created NNs ensemble will include uncertainty for both training data and parameters; 
\item [\textit{Step 4.}] For each trained NN in \textit{Step 3}, we predict the associate point estimate on $\mathcal{T'}$, producing a bootstrap distribution consisting of $B$ point predictions, i.e.: 
\begin{equation}
\label{eq:estimed_distribution}
	\hat{\mathbb{P}}\left(\hat{k}_t \big | \bm{\kappa}_{\mathcal{T'}}\right)=\left(\hat{k}_t^{(b)} =\hat{f}_{LSTM}\left(\bm{\kappa}_\mathcal{T}^{(b)},\hat{\bm{\mathcal{W}}}^{(b)}\right), b=1,\ldots,B\right);
\end{equation}
\item [\textit{Step 5.}] From the bootstrap distribution $\hat{\mathbb{P}}\left(\hat{k}_t \big | \bm{\kappa}_{\mathcal{T'}}\right)$, we  find the estimates of interest by aggregation. Hence, the bagged estimate of the variance $\sigma^2 _{\hat{k}_t}$ is:
\begin{equation}
\label{eq:bagged_variance}
		\hat{\sigma}^2_{\hat{k}_t}=\frac{1}{B-1}\sum_{b=1}^{B}\left(\hat{f}_{LSTM}\left(\bm{\kappa}_\mathcal{T}^{(b)},\hat{\bm{\mathcal{W}}}^{(b)}\right)-\overline{k}_{t}\right),
\end{equation}
where $\overline{k}_{t}=\frac{1}{B}\sum_{b=1}^{B}\hat{f}_{LSTM}\left(\bm{\kappa}_\mathcal{T}^{(b)},\hat{\bm{\mathcal{W}}}^{(b)}\right)$ is the bagged estimate for the conditional expectation $\mathbb{E}\left(\hat{k}_t \big | \bm{\kappa}_\mathcal{T'}\right)$.
\end{enumerate}
We emphasize that using an ensemble technique for estimating the NN output variance, the expected bias component is irrelevant. Thus, the ensemble technique could associate high uncertainty to the NN predictions, as the bias-variance trade-off states. Howbeit, if the employed bootstrap technique fits the density estimation problem and the trained NN model is robust, then the estimated variance does not induce an explosive prediction intervals behaviour over time.

\subsection{Estimating $\sigma^2_{\gamma}$}
\label{section:variance_noise}
Mortality dynamic incorporates an intrinsic randomness not explained by the network model, as showed by at eq.(\ref{eq:parametric_eq_lstm}). A NN appropriately trained catches the key input-output data schemes, skimming noisy examples. Consequently, the NN model is suitable to produce forecast, avoiding overfitting occurrences. For our purposes, such noise is analysed and predicted. Considering the training set interval $\mathcal{T}$, we deal with the series  $\left(k_t-\hat{k}_t\right)_{t\in \mathcal{T}}$ as a proxy of the unwrapped noise by NN. It helps to evaluate the estimate $\hat{\sigma}^2_{\gamma}$ as the time-index residual uncertainty over $\mathcal{T}$, spreading the random error over the forecast horizon $\mathcal{T'}$ through a random walk representation.  

\bigskip

\section{Performance metrics of forecasting}
\label{sec:metriche}
To assess quantitatively the LC-LSTM projections over the forecast range, we employ performance metrics both for point and interval forecasts. In the former case, the Root Mean Squared Error (henceforth RMSE) is acknowledged as accuracy measure both for the time-index and mortality rates, respectively:
\begin{equation}
\label{eq:rmse_mae}
RMSE_{(k)}=\sqrt{\frac{\sum_{t=t_{n}+1}^{t_{n}+s}\left(k_{t}-\hat{k}_{t}\right)^2}{s-1}}, \quad
RMSE_{(m)}=\sqrt{\frac{\sum_{t=t_{n}+1}^{t_{n}+s}\left(\log m_{x,t}-\log\hat{m}_{x,t}\right)^2}{s-1}}. \\
\end{equation}
To judge PI quality and effectiveness, we jointly examine PI coverage probability and PI width. In analytical terms, we consider two indicators namely the Prediction Interval Coverage Probability (henceforth PICP) and the Mean Prediction Interval Width (henceforth MPIW). The former inspects the PI coverage counting how many values are wrapped in the probabilistic range, given a confidence level. In other words, the PICP estimates the probability that the mortality rates values fall within the PI provided by the mortality model. Let $\hat{k}_{t}^{L}$ be the estimated time-index lower bound and be $\hat{k}_{t}^{U}$ the estimated time-index upper bound. Then, the PICP for the $k_t$ series is defined as follows:
\begin{equation}
PICP_{(k)}= \frac{1}{s-1}\sum_{t=t_{n}+1}^{t_{n}+s} \bm{1}_{\left\lbrace \hat{k}_{t} \ \in \ [\hat{k}_{t}^{L}, \hat{k}_{t}^{U}]\right\rbrace},
\end{equation} 
where $\bm{1}_{\left\lbrace \cdot \right\rbrace}$ is the indicator function such that $\bm{1}_{\left\lbrace \cdot \right\rbrace}= 1$ if $ \hat{k}_{t} \in [\hat{k}_{t}^{L},\hat{k}_{t}^{U}]$, and $\bm{1}_{\left\lbrace \cdot \right\rbrace}=0$ otherwise. 

The MPIW indicates the PI mean width over forecasting horizon, that is:

\begin{equation}
MPIW_{(k)} = \frac{1}{s-1} \sum_{t=t_{n}+1}^{t_{n}+s} \hat{k}_{t}^{U}-\hat{k}_{t}^{L}.
\end{equation}

We also calculate PICP and MPIW on the log-mortality rates by a given age $x$. Let $\log \hat{m}_{x,t}^{L}$ be the estimated mortality rates lower bound and be $\log \hat{m}_{x,t}^{U}$ the estimated mortality rates upper bound. Then, we specify the PICP and MPIW as follows:
\begin{equation}
PICP_{(m)}= \frac{1}{s-1}\sum_{t=t_{n}+1}^{t_{n}+s} \bm{1}_{\left\lbrace \log\hat{m}_{x,t} \ \in \ [\log\hat{m}_{x,t}^{L}, \log\hat{m}_{x,t}^{U}]\right\rbrace},
\end{equation} 
where $\bm{1}_{\left\lbrace \cdot \right\rbrace}= 1$ if $ \log \log\hat{m}_{x,t} \in [\log\hat{m}_{x,t}^{L},\log\hat{m}_{x,t}^{U}]$, and $\bm{1}_{\left\lbrace \cdot \right\rbrace}=0$ otherwise, and 

\begin{equation}
MPIW_{(m)} = \frac{1}{s-1} \sum_{t=t_{n}+1}^{t_{n}+s} \log\hat{m}_{x,t}^{U}-\log\hat{m}_{x,t}^{L}.
\end{equation}

Higher PICP value indicates PIs having a greater probability to cover the true mortality realizations. High MPIW values are desirable in order to provide a suitable uncertainty portrayal. An explosive demeanour in variability is reflected by greater MPIW levels, jeopardizing the biological plausibility of mortality forecasts. The latter qualitative criterion is valuable, since it concerns the predicted uncertainty levels consistency  w.r.t. the historical volatility at different ages (\cite{Cairns2011}). 

\section{Empirical investigation and results}
\label{sec:numerical_application}
In the following we illustrate the empirical analysis carried out to test our model proposal. The results and considerations presented will also take into account the forecasts getting from the LC Poisson model (\cite{Brouhns}) as a term of comparison. The equations defining the LC Poisson predictions are reported in Appendix A. Our analysis has been achieved using the R software (version 3.6.3), exploting the packages \textit{StMoMo} (version 0.4.1), \textit{forecast} (version 8.13), \textit{Keras} (version 2.2.5) and \textit{Tensorflow} (version 1.13.1). 

\subsection{Data}
Our numerical experiment concerns three countries worldwide, Australia, Japan and Spain, analyzed by gender. Data were downloaded from the Human Mortality Database \citep{HMD}, considering the age set $\mathcal{X}=\left\lbrace0,1,\ldots,99 \right\rbrace$. We consider two calendar years sets, 1950-2018 and 1960-2018, to assess both accuracy and variability of the LC-LSTM outcomes with respect to the historical time chunks. This allows us to verify the shortening of the NN training set effects on the learning process, i.e. the network robustness to changes in the training set length. 
For each country, period and gender under investigation, the mortality data are processed through the elaborations explained in the following Section.

\subsection{Neural Network tuning, training and ensembling}
To apply the LSTM model, firstly we fit the LC model in eq.(\ref{eq:predictor_LC}) to the observed age-period mortality data, estimating the age-dependent parameters and the time-index series $\left( k_t\right)_{t\in \mathcal{T}}$. We pose $j=1$ to define the one-step lagged time series, i.e. $\bm{\kappa}_\mathcal{T}=(k_{t-1})_{t\in \mathcal{T}}$, imposing to the LSTM model to sift mortality data at annual paces, that is  $k_t=f_{LSTM}\left(k_{t-1}; \bm{\mathcal{W}}\right)+ \gamma_t$ according to eq.(\ref{eq:parametric_eq_lstm}).   

To tune and train the NN model is necessary to split the time-index series into distinct datasets. To this end, we exploit a hierarchical procedure. Setting $T=2000$ as forecasting year for all the countries investigated, we define the training set and the testing set as below:  
\begin{equation}
\label{eq:train_test}
\begin{aligned}
\text{TRAINING SET:} & \quad \mathcal{TR}=\left(k_t | k_{t-1} \right)_{t=t_0,\ldots,T}\\
\text{TESTING SET:}& \quad \mathcal{TS}=\left(k_t | k_{t-1} \right)_{t=T+1,\ldots,t_n}, \\
\end{aligned}
\end{equation}
where $t_0=\left\lbrace1950, 1960\right\rbrace$ and $t_n=2018$. In addition, to validate the model we divide the training set into a sub-training set and in a validation set, considering the splitting rule $80\%-20\%$. Hence, denoting with $T^{\text{sub}}$ the last year in the sub-training set, we have:
\begin{equation}
\label{eq:train2_val}
\begin{aligned}
\text{SUB-TRAINING SET:}& \quad \mathcal{TR}^{\text{sub}}=\left(k_t | k_{t-1}\right)_{t=t_0,\ldots,T^{\text{sub}}} \\
\text{VALIDATION SET:}& \quad \mathcal{VS}=\left(k_t | k_{t-1}\right)_{t=T^{\text{sub}}+1,\ldots,T} \\
\end{aligned}
\end{equation}

We use the sets $\mathcal{TR}^{\text{sub}}$ and $\mathcal{VS}$ to tune the NN structure through a grid search technique. Thus, a bounded discrete parametric space is a priori settled, whose possible values are arbitrarily chosen acting as network hyper-parameters. Fixing a hyper-parameters combination, the learning process begins minimizing the Mean Squared Error cost function over the set $\mathcal{TR}^{\text{sub}}$. We select as optimal NN structure the one identified by the hyper-parameters combination returning the minimum error on the validation set $\mathcal{VS}$. In doing so, the function composition, $\hat{f}_{LSTM}$, is built according to the data. For each countries and both genders under investigation, the LSTM model is characterized by $p=1$ hidden layer, considering the ReLu function (\cite{nair}) as feedforward activation function, the tangent hyperbolic function as recurrent activation function and the linear function as the output layer activation function $\psi$. The number $N_p$ of hidden neurons varies depending on both countries and genders studied. Finally, the best NN architecture is afterwards employed on the training set, $\mathcal{TR}$, to spawn point predictions over the testing set horizon. Therefore, we compare the NN forecasts, $\hat{k}_t$, with the available time-index values in $\mathcal{TS}$ as backtesting exercise.   

The depicted learning process suggests the minimum learning period length to produce robust predictions. Shortening the training dataset, our experiment highlights that training periods beginning after 1960s generate predictions sensitive to small variations in the data. Therefore, we need at least 40 observations to adequately tune the network model. 

The tuned LSTM model acts as the reference model in \textit{Step 1.} of the proposed bagging scheme in Section \ref{section:variance_boot}. Following the bootstrap strategy proposed in \cite{Koissi}, we generate $B=1000$ bootstrap samples of the training set $\mathcal{TR}$. Maintaining the tuned network function composition, $\hat{f}_{LSTM}$, we estimates its weights on the $b^{th}$ training set producing the related forecasts over testing set horizon. Therefore, the bootstrap distribution $\hat{\mathbb{P}}\left(\hat{k}_t \big | k_{t-1}\right)$ is obtained, allowing for the bagged variance calculation as in eq.(\ref{eq:bagged_variance}).

\subsection{Results}
\label{sec:results}
In the following we provide the results of our numerical application, recalling the performance metrics exposed in Section \ref{sec:metriche}. We firstly refer to the RMSE metric to judge the point forecasts accuracy, considering also the error of the LC projections as benchmark. To appreciate PIs quality by PICP and MPIW indicators, after the bagging scheme we need to assess the noise variance in order to estimate PI boundaries. We consider the sample variance of the series $\left(k_t-\hat{k}_t\right)_{t \in \mathcal{TR}}$ as the noise variance estimate over training set. To project the noise and its uncertainty over testing set horizon, we inspect its possible random walk behaviour. To this end, the Augmented Dickey Fuller (ADF) test is implemented. In addition, we test normality features of the noise realizations through statistical normality tests, such as the Shapiro-Wilk, the D'Agostino-Pearson and the Jarque-Bera. For all countries and both genders investigated, the achieved noise analysis confirms the suitability of a random walk representation with Gaussian innovations for the noise component (see Appendix B). Therefore, the LC-LSTM time-index values are embedded within the following PI, for a confidence level $\alpha$:
\begin{equation}
	\left[\hat{k}_t^L, \hat{k}_t^U\right]=\left[\hat{k}_t - z_{\frac{\alpha}{2}}\sqrt{\hat{\sigma}^2_{\hat{k}_t}+\hat{\sigma}^2_{\gamma}} \ , \ \hat{k}_t + z_{\frac{\alpha}{2}}\sqrt{\hat{\sigma}^2_{\hat{k}_t}+\hat{\sigma}^2_{\gamma}}\right]
\end{equation}       
where $z_{\alpha}$ is the $\alpha$-quantile of a Standard Normal distribution. 

We proceed to calculate performance metrics for the LC-LSTM, as well as for the LC model. Their values for the time-index appear in \autoref{tab:result}, comparing the LSTM performances, for the LC-LSTM, with the ARIMA ones, for the LC model.

\begin{table}[H]
	\centering
	\renewcommand*{\arraystretch}{1.15}
	\caption{$k_t$ performance metrics values for each training period. Forecasting years: 2001-2018.\label{tab:result}}
	\begin{adjustbox}{max width=\textwidth}
		\begin{tabular}{|c|c|ccc|ccc|ccc|ccc|}
			\hline
			\multirow{3}{*}{\textbf{Country}} & \multirow{3}{*}{\textbf{Model}} & \multicolumn{6}{c|}{\textbf{Training period 1950-2000}} & \multicolumn{6}{c|}{\textbf{Training period 1960-2000}} \tabularnewline
			\cline{3-14}
			& & \multicolumn{3}{c|}{\textbf{Male}} & \multicolumn{3}{c|}{\textbf{Female}} & \multicolumn{3}{c|}{\textbf{Male}} & \multicolumn{3}{c|}{\textbf{Female}} \tabularnewline
			\cline{3-14}
			& &  $RMSE$ & $PICP_{(k)}$ & $MPIW_{(k)}$ & $RMSE$ & $PICP_{(k)}$ & $MPIW_{(k)}$ & $RMSE$ & $PICP_{(k)}$ & $MPIW_{(k)}$ & $RMSE$ & $PICP_{(k)}$ & $MPIW_{(k)}$ \tabularnewline
			\hline	
			\multirow{2}{*}{\textbf{\textit{Australia}}} & \multirow{1}{*}{ARIMA} &	9.514 & 1 & \textbf{53.503} & 3.861 &	1 &	25.195 & 5.138 & 1 & \textbf{47.485} & 3.637 &	1	& 25.089 \tabularnewline 
			& \multirow{1}{*}{LSTM} & \textbf{4.280} & 1 & 32.865 & \textbf{3.790}  & 1	& \textbf{39.478} & \textbf{1.970} & 1 & 28.143 & \textbf{2.659} & 1 & \textbf{37.433} \tabularnewline 
			\hline		
			\multirow{2}{*}{\textbf{\textit{Japan}}} & \multirow{1}{*}{ARIMA} & 3.743 & 1 & 21.503 & \textbf{10.084} & 0.556 & 20.767 & 4.647 &  1 & 17.392 & 9.790 & 0.500 & 12.409 \tabularnewline 
			& \multirow{1}{*}{LSTM} & \textbf{2.228} & 1 & \textbf{43.784} & 18.014 & \textbf{1} & \textbf{53.431} & \textbf{2.069} & 1 & \textbf{28.209}
			& \textbf{5.818} & \textbf{1} & \textbf{30.701} \tabularnewline 
			\hline
			\multirow{2}{*}{\textbf{\textit{Spain}}} & \multirow{1}{*}{ARIMA} & 14.038 & 0.333 & 19.354 & \textbf{6.215} & 1 & 21.394 & 13.071 & 0.333 & 17.343 & 5.805 & 1 & 20.747 \tabularnewline 
			& \multirow{1}{*}{LSTM} & \textbf{8.625} & \textbf{1} & \textbf{35.424} & 7.471 & 1 & \textbf{60.373} & \textbf{9.983} & \textbf{0.778}	& \textbf{23.340} & \textbf{4.357} & 1 & \textbf{28.141} \tabularnewline 
			\hline
		\end{tabular}
	\end{adjustbox}
\end{table}

For all the countries considered, the time-index series experienced since the 1960s exhibit a markable linear decline over time. In particular, mortality reductions accelerated over period 1950-1960, and an approximately constant rate of degrowth characterize the interval 1960-2000. Such a behaviour has been driven by a decline in infant mortality, as well as reductions in mortality at older ages after the WWII (see for instance \cite{rau}). 

As general statement about predictions accuracy, our analysis confirm the ARIMA process suitability to represent linear evolution in mortality. On the other side, the LSTM seems to be advisable for linear, noisy or non-linear series. Scrutinizing the uncertainty results, the LSTM offers always a greater probability coverage, in most cases due to the PI width. Because of the LSTM point predictions present low bias, their variance tend to be increasing and to be higher than the ARIMA one.

The majority of cases promote the LSTM model usefulness in affording a more actual mortality trend, as well as for uncertainty estimation. The most virtuous example concerns the Australian males, presenting the lower RMSE on the period 1960-2000. Considering the training period 1950-2000, the NN allows the simultaneous presence of a total coverage of the future $k_t$ realizations and a proper PI width. This situation appears also reducing the training set length, i.e. considering the interval 1960-2000.  
A suitable mortality dynamic for the ARIMA model is offered by Japanese females. In fact, their mortality behaviour presents a strong linear decreasing over time, also if observed from 1950. In this circumstance, the LSTM learns a too steep trend of mortality reductions, as opposed to ARIMA. However, switching to the training period 1960-2000 the network performances improve significantly. We observe a gain of 67,7\% in RMSE terms, maintaining at the same time both a total probability coverage and a coherent MPIW value. On the other side, the ARIMA model does not favour a reliable uncertainty estimation in both periods. Its coverage probability is around 50\%, indicating that the predictive model fails, on average, to anticipate half of the future realizations. An analogous result holds for the Spanish males, whose time-index dynamic shows a noisier series over both training periods. Indeed, the ARIMA coverage probability for Spanish males remains stable around 33\%.           

Furthermore, we also depict the mortality profile for both genders considering ages 45, 65 and 85. To explore such a results, we display the computed performance metrics in \autoref{tab:result_rates}, as well as the PIs graphs in \autoref{fig:pi_males} and \autoref{fig:pi_females}.

\begin{figure}[H]
	\centering
	\caption{MALE PI ($\alpha=5\%$). Forecasting period: 2001-2018. Training period: 1950-2000 (left), 1960-2000 (right).}
	\label{fig:pi_males}
	\includegraphics[width=0.45\linewidth,height=0.25\textheight]{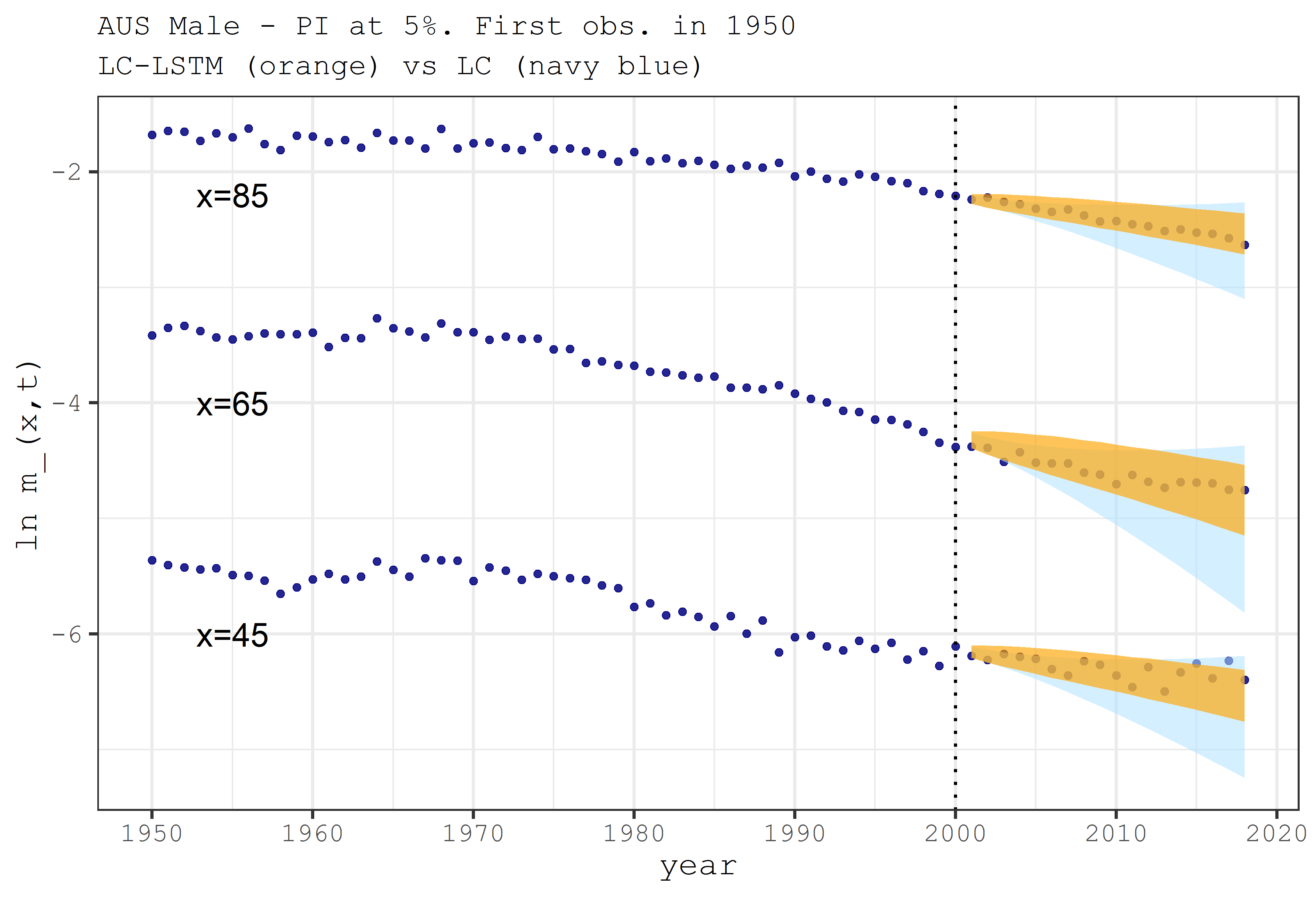}  \quad
	\includegraphics[width=0.45\linewidth,height=0.25\textheight]{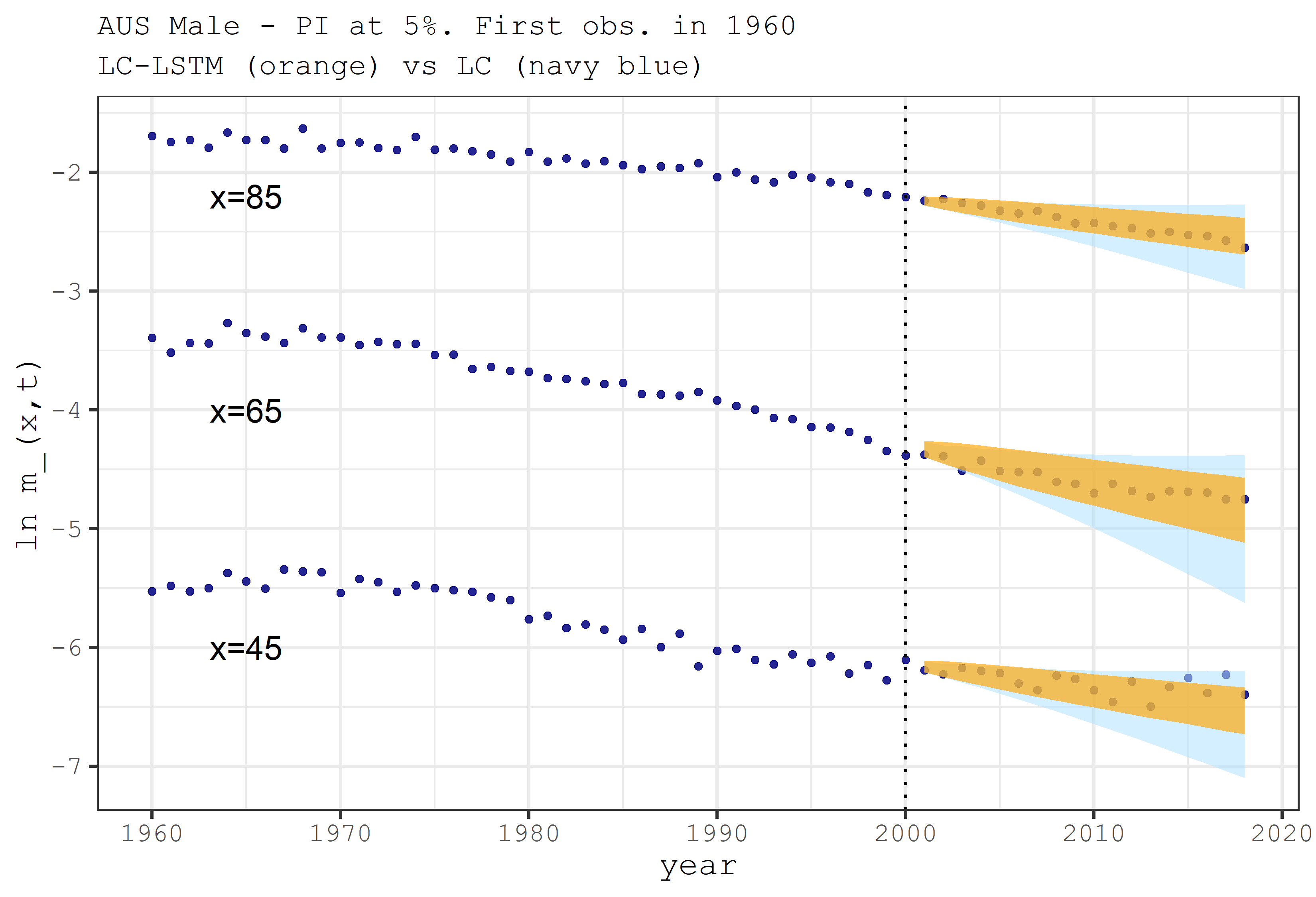} \\
	\includegraphics[width=0.45\linewidth,height=0.25\textheight]{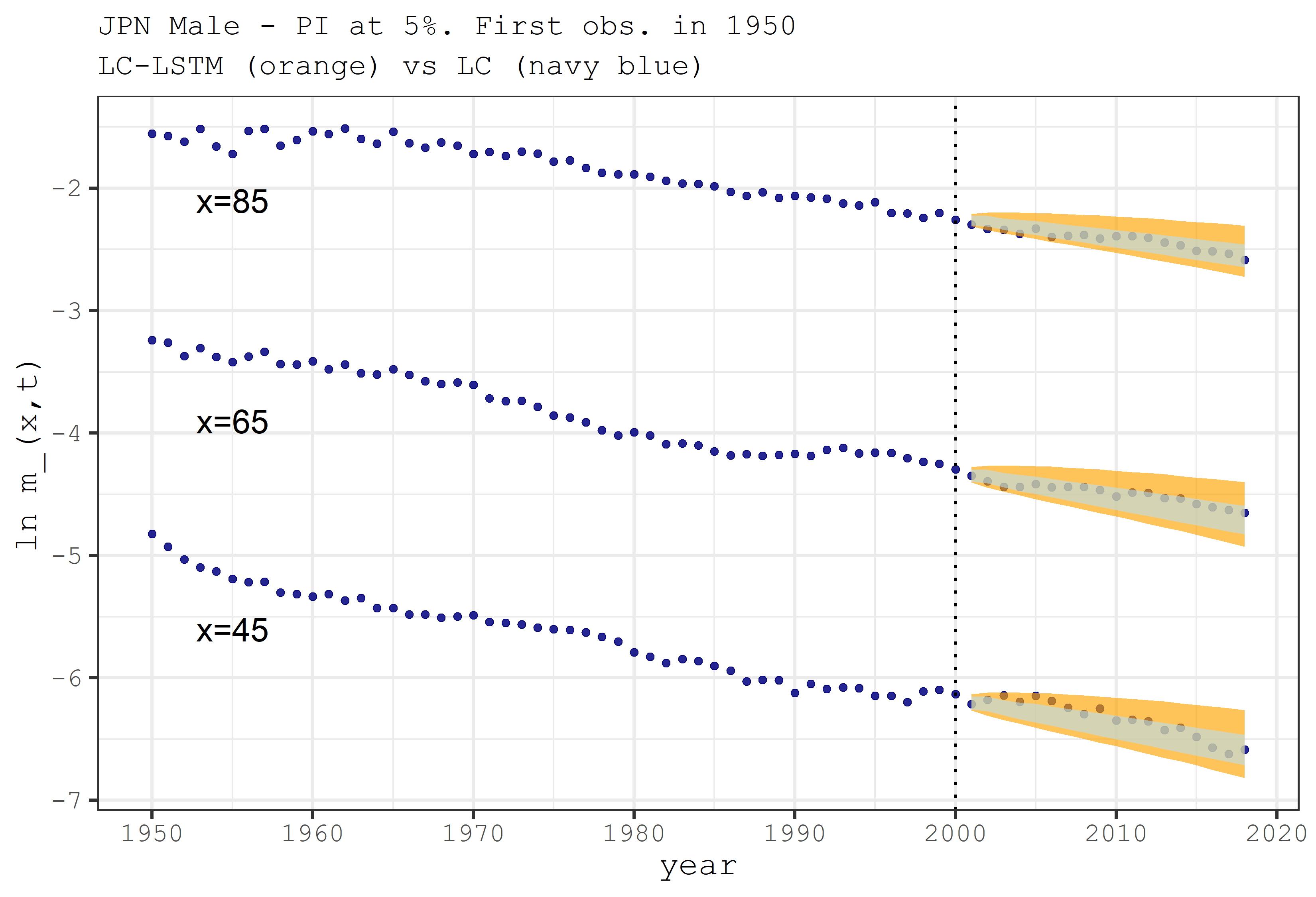}  \quad
	\includegraphics[width=0.45\linewidth,height=0.25\textheight]{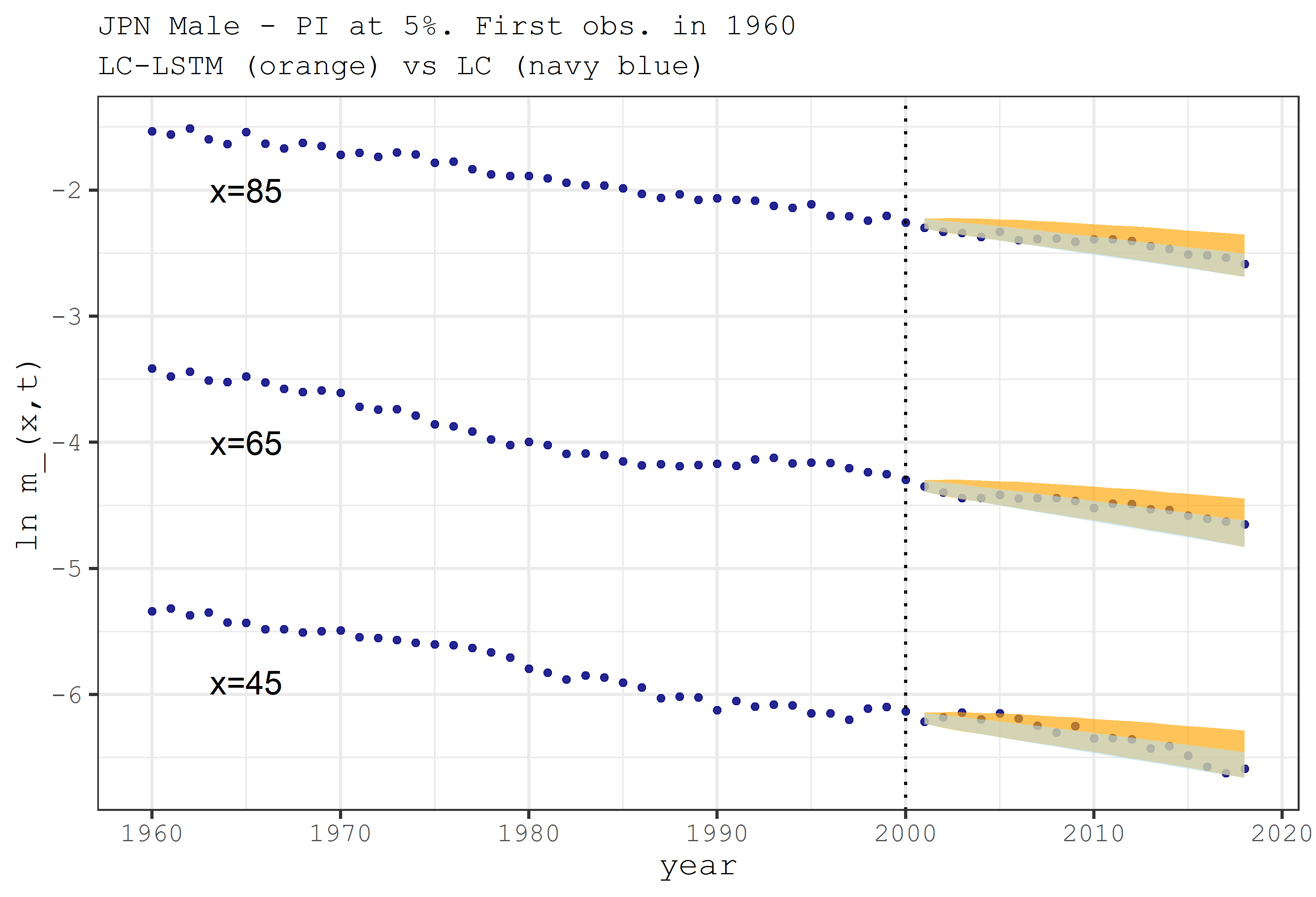} \\
	\includegraphics[width=0.45\linewidth,height=0.25\textheight]{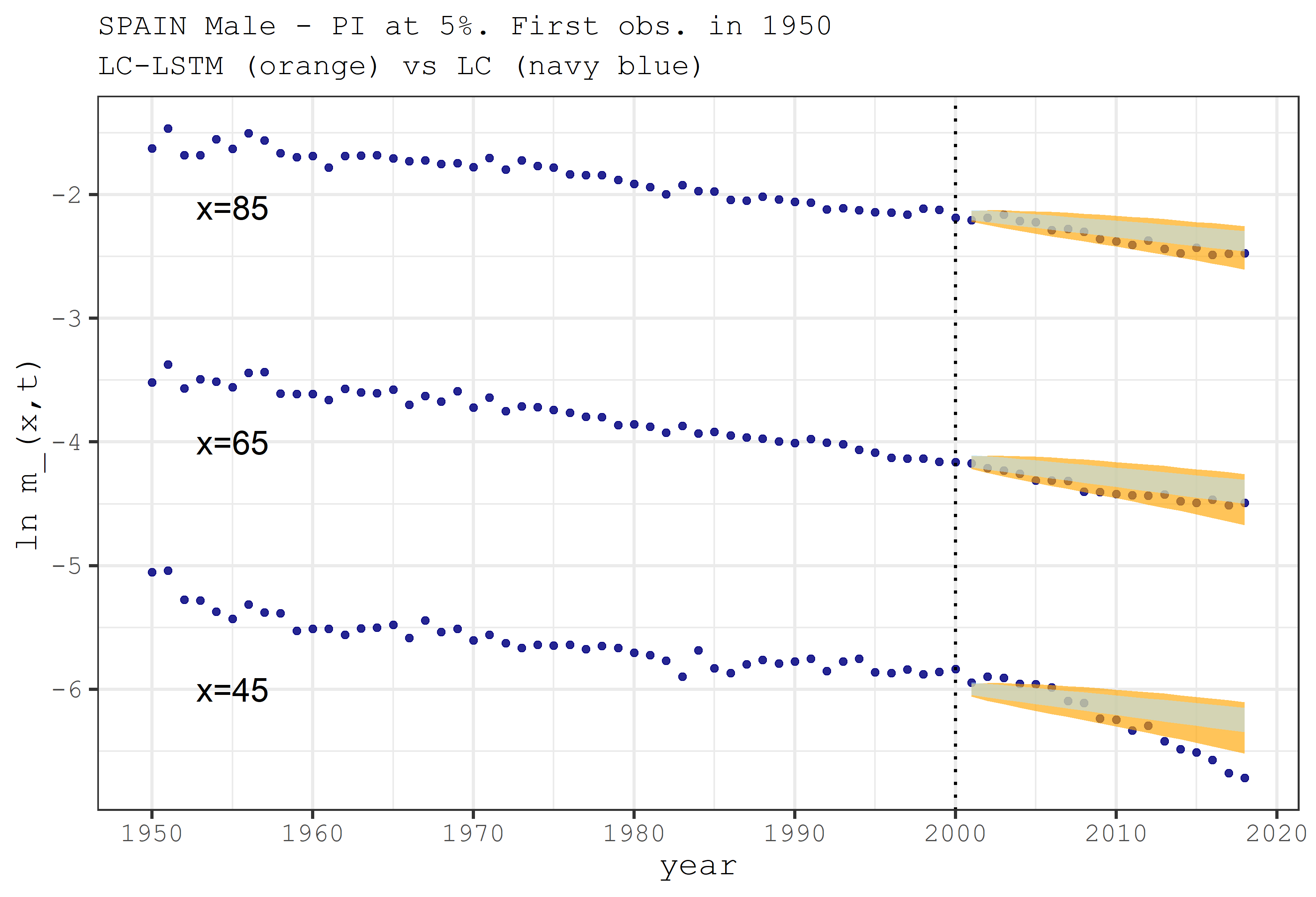}  \quad
	\includegraphics[width=0.45\linewidth,height=0.25\textheight]{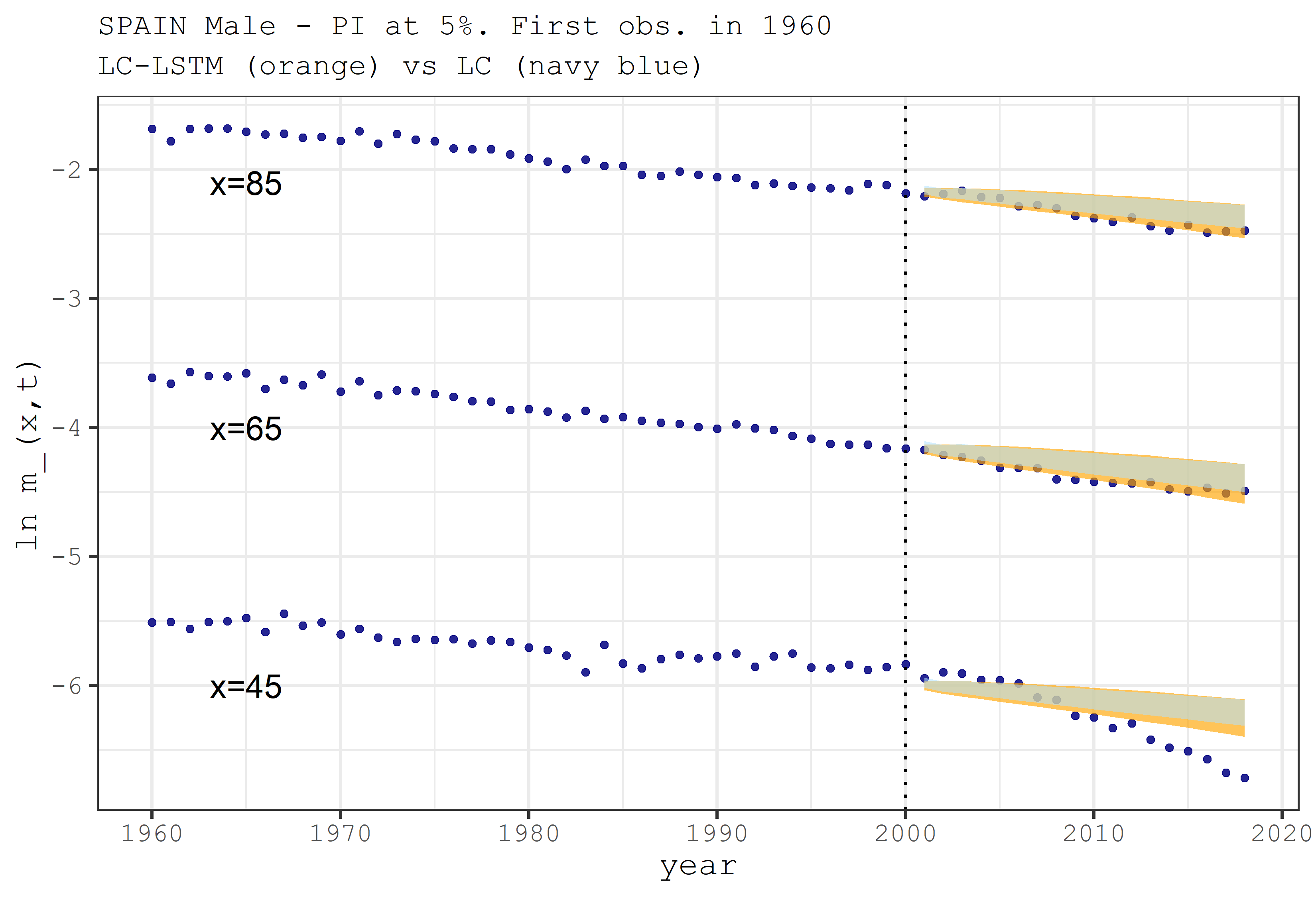} \\
\end{figure}

We can highlight the estimated PIs for the LC-LSTM model both in terms of point and interval estimates. Looking at the Japanese population, we endorse the findings in \autoref{tab:result} for ages 45 and 65. The LC-LSTM provides boundaries proper shaped according to death rates, while the LC model presents the narrowest ranges of variability lacking uncertainty information. For example, over training period 1960-2000 for the Japanese females aged 65, the PIs from LC model shows a coverage probability around $33\%$, while the LC-LSTM provides $PICP_{(m)}=1$ with a similar interval width. For the age 85, whose mortality reductions present slower linear changes over time, also the LC fits the future mortality profile. 

\begin{figure}[!h]
	\centering
	\caption{FEMALE PI ($\alpha=5\%$). Forecasting period: 2001-2018. Training period: 1950-2000 (left), 1960-2000 (right).}
	\label{fig:pi_females}
	\includegraphics[width=0.45\linewidth,height=0.25\textheight]{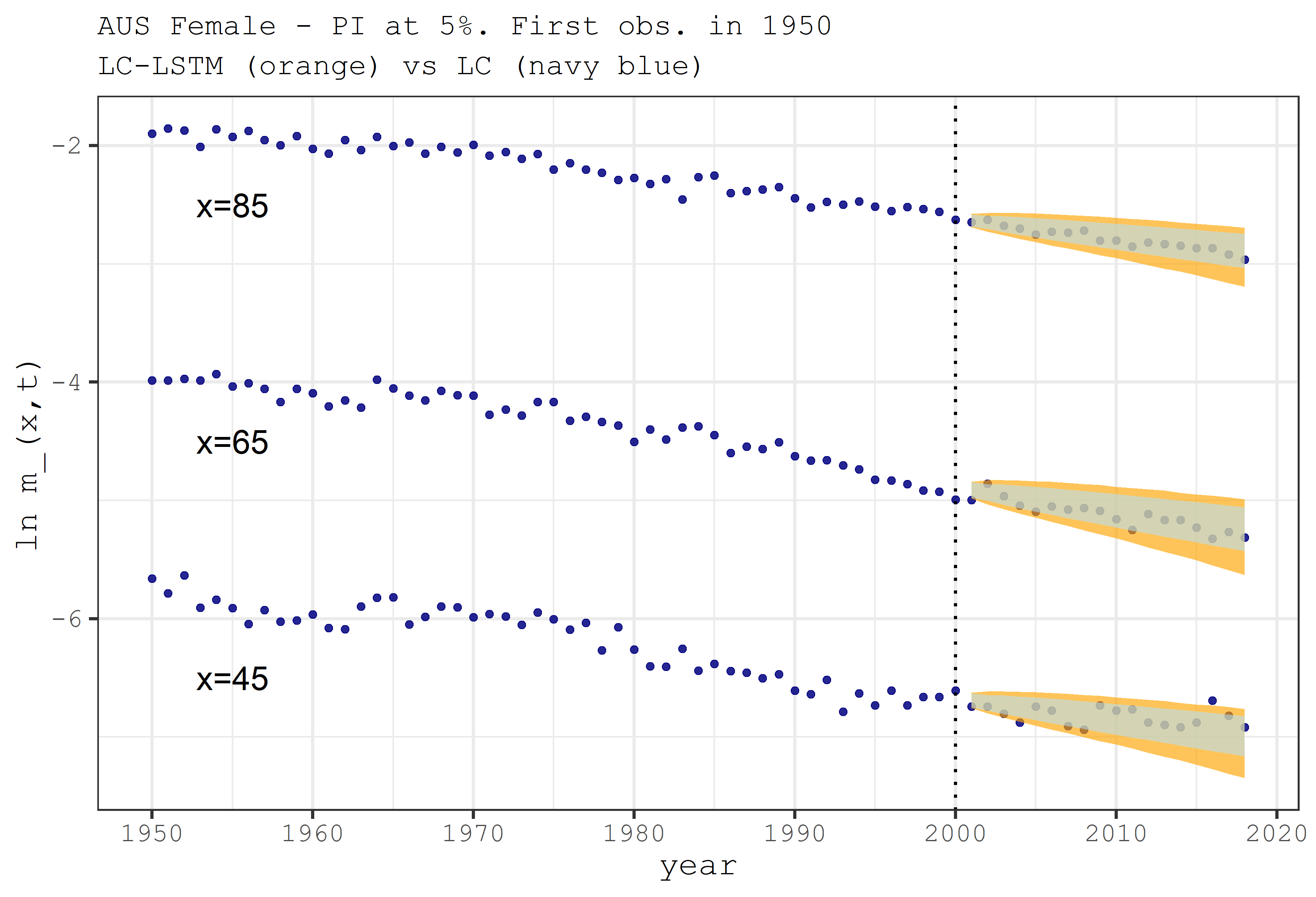}  \quad
	\includegraphics[width=0.45\linewidth,height=0.25\textheight]{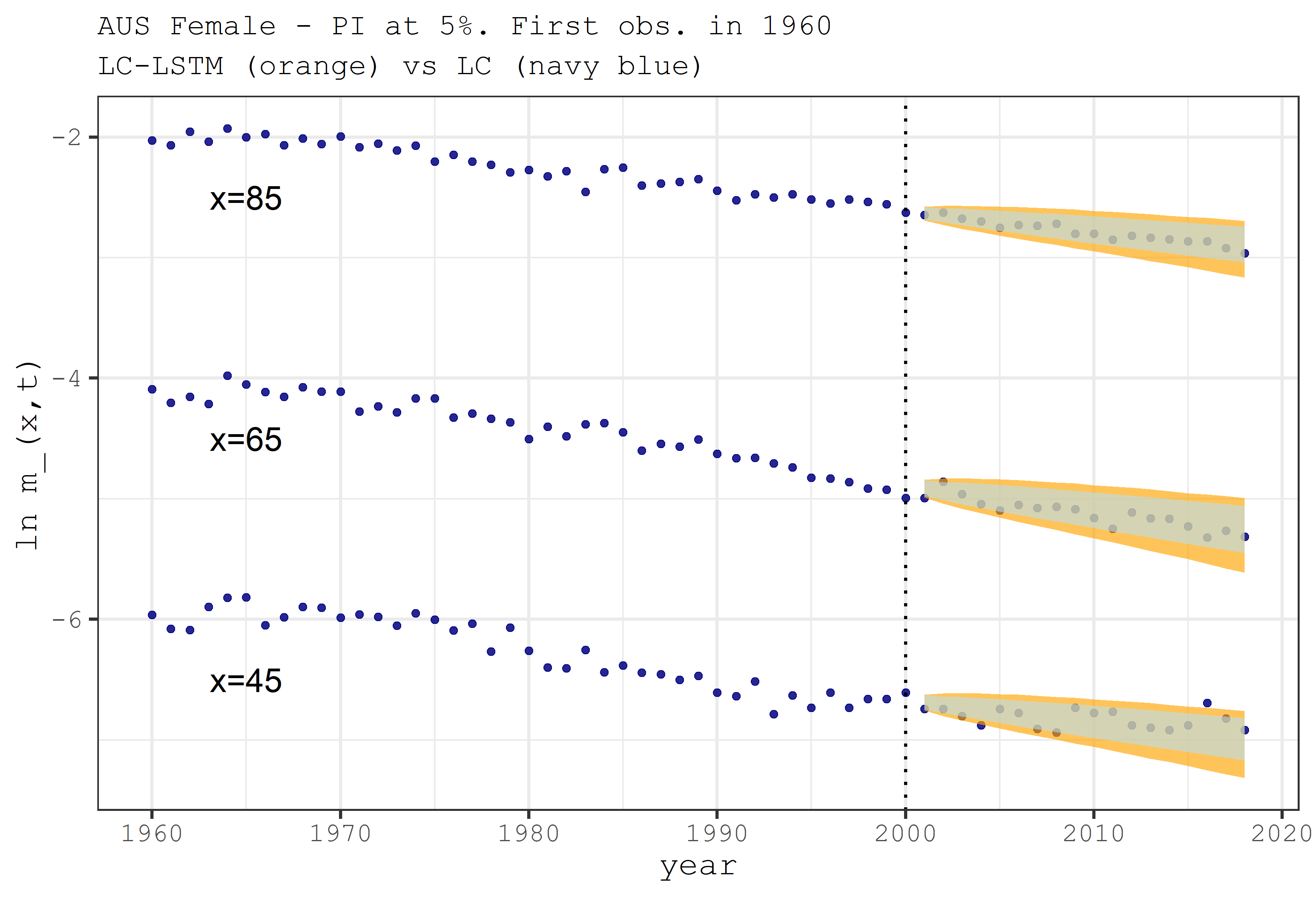} \\
	\includegraphics[width=0.45\linewidth,height=0.25\textheight]{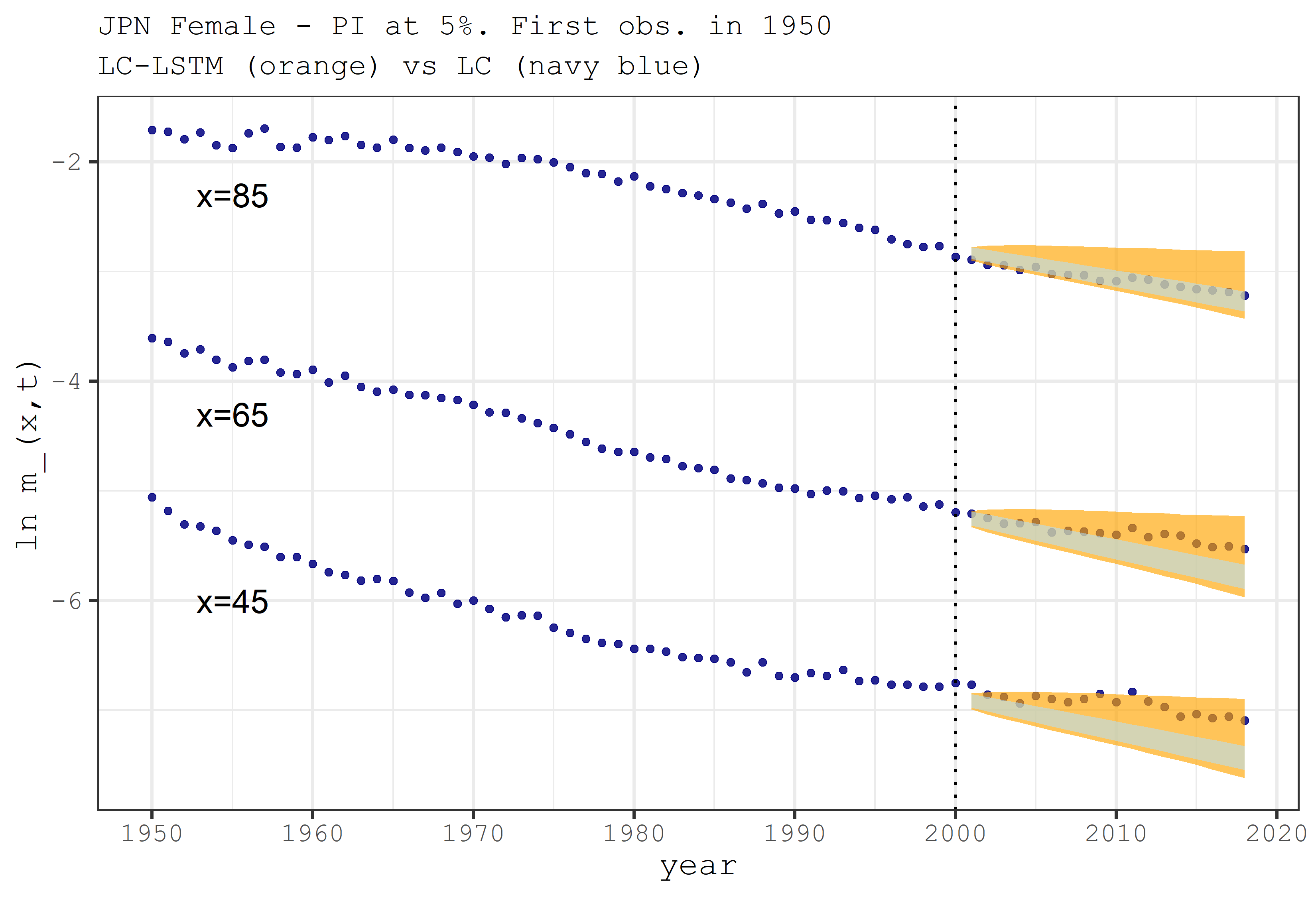}  \quad
	\includegraphics[width=0.45\linewidth,height=0.25\textheight]{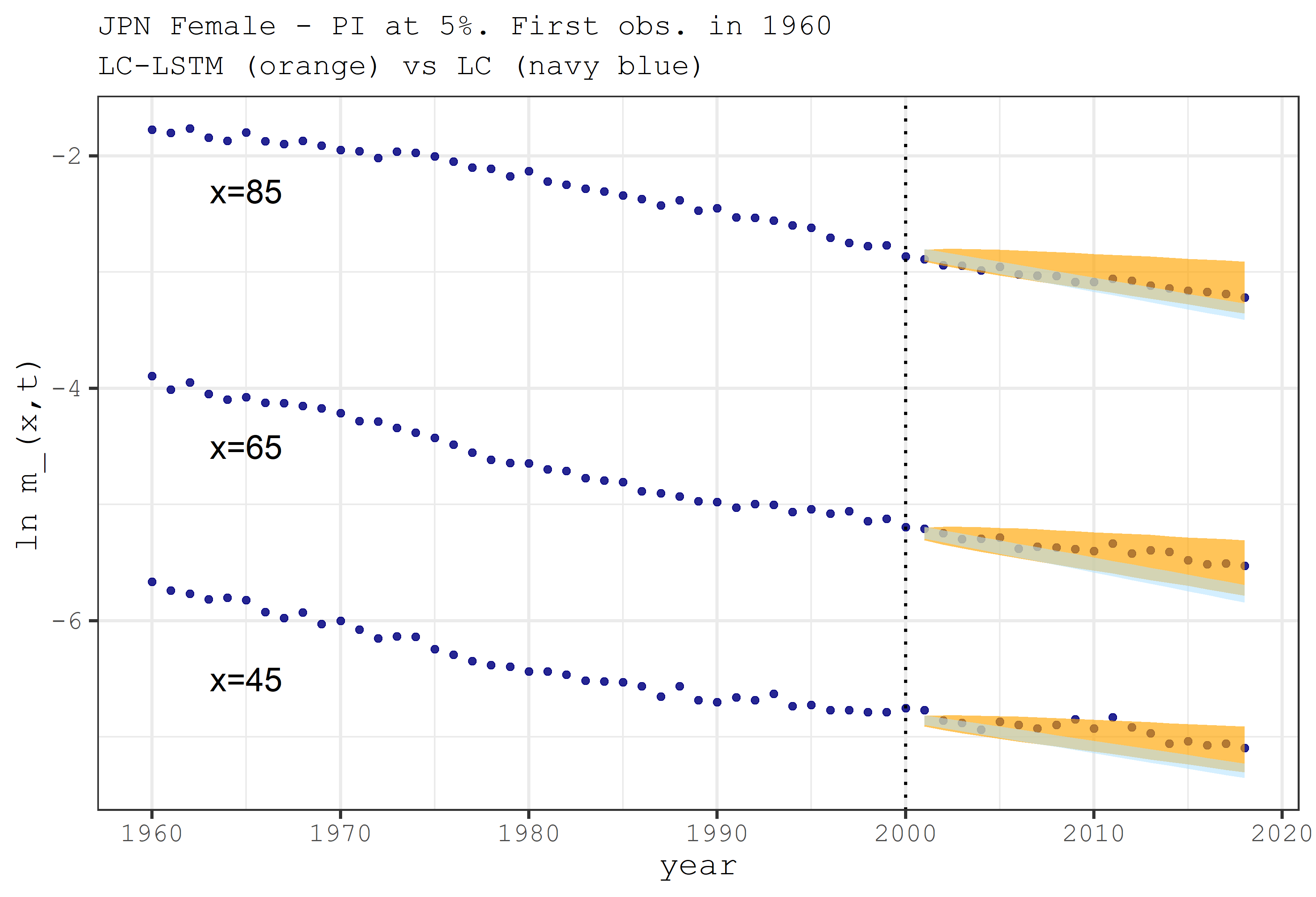} \\
	\includegraphics[width=0.45\linewidth,height=0.25\textheight]{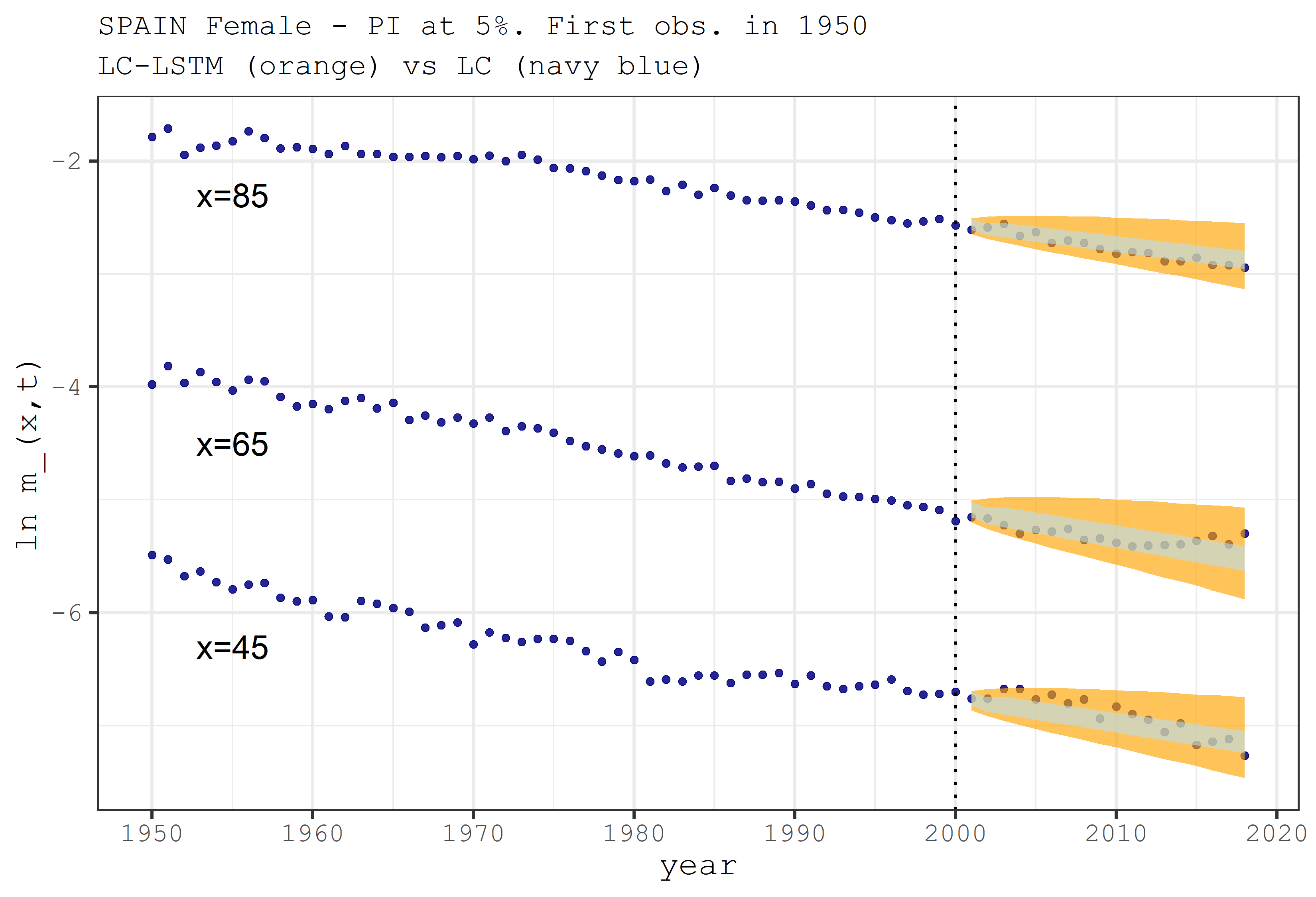}  \quad
	\includegraphics[width=0.45\linewidth,height=0.25\textheight]{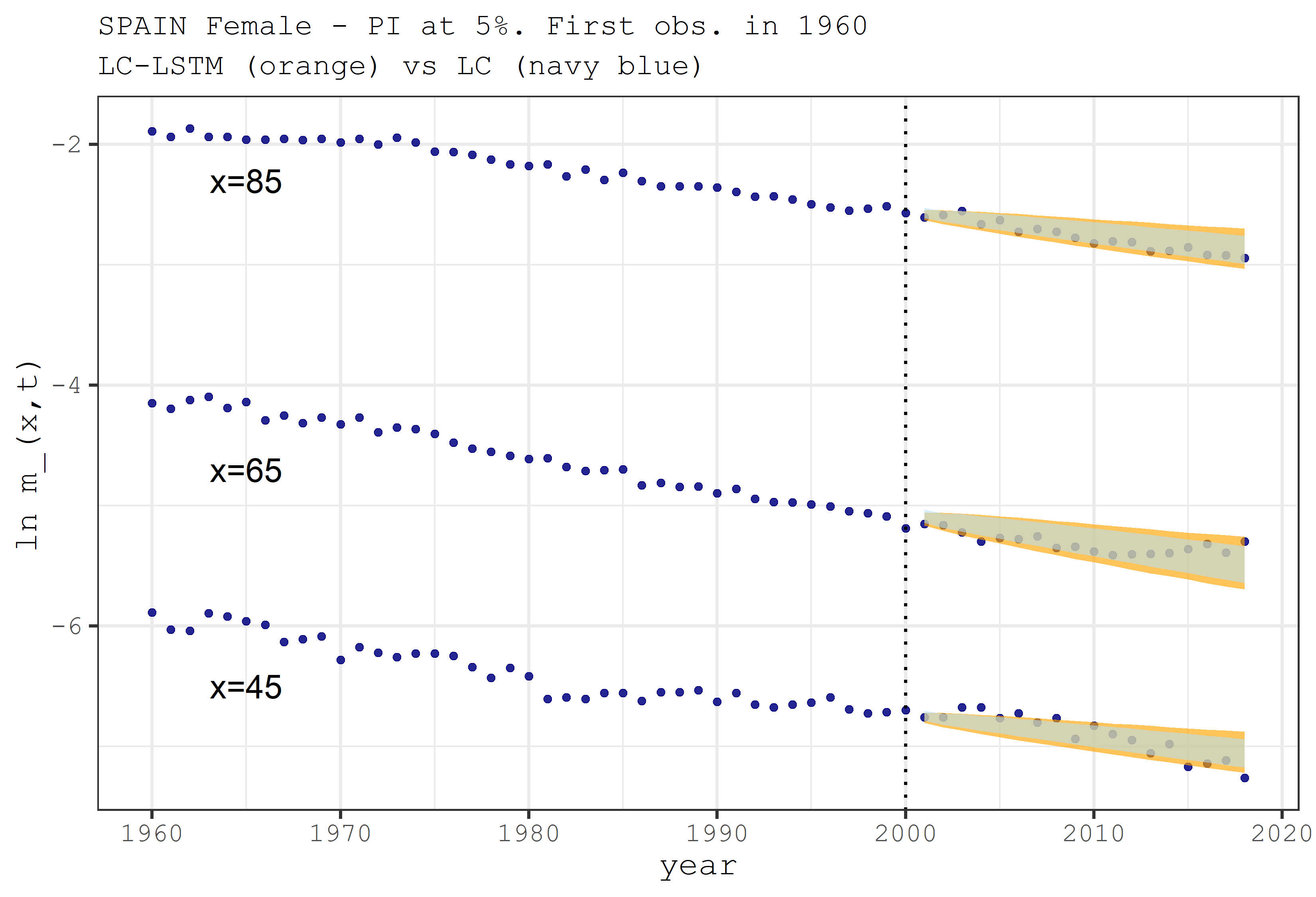} \\
\end{figure}

For Spanish population the LC-LCTM seems to be the befitting model for predictive purposes. As reported in \autoref{tab:result} for both Spanish genders, as the training period shifts the MPIW indicator for the time-index identifies a significant reductions in the PI width ($-20.56\%$ for males and $-53.38\%$ for females), although full probability coverage is maintained. Such a reduction affects the uncertainty measurement from the LC-LSTM model, albeit with an ever wider PI than the LC model one. Finally, we stress how both the LC model and the LC-LSTM model fail to catch the non-linear mortality pattern characterizing age 45 over testing horizon. Starting from the 2000s, Spanish males aged 45 has been experimented a notable acceleration in the rate of mortality reduction. Since we pose $T=2000$ as the forecasting year, the extrapolation approach underlying both the LC and the LC-LSTM induces misleading projections.

\begin{table}[H]
	\centering
	\renewcommand*{\arraystretch}{1.2}
	\caption{$\log m_{x,t}$ performance metrics values for each training period. Forecasting years: 2001-2018.\label{tab:result_rates}}
	\begin{adjustbox}{max width=\textwidth}
		\begin{tabular}{|c|c|ccc|ccc|ccc|ccc|}
			\multicolumn{1}{c}{} & \multicolumn{1}{c}{} & \multicolumn{11}{c}{\LARGE$\bm{x=45}$} & \multicolumn{1}{c}{ } \tabularnewline
			\cline{1-14}
			\hline
			\multirow{3}{*}{\textbf{Country}} & \multirow{3}{*}{\textbf{Model}} & \multicolumn{6}{c|}{\textbf{Training period 1950-2000}} & \multicolumn{6}{c|}{\textbf{Training period 1960-2000}} \tabularnewline
			\cline{3-14}
			& & \multicolumn{3}{c|}{\textbf{Male}} & \multicolumn{3}{c|}{\textbf{Female}} & \multicolumn{3}{c|}{\textbf{Male}} & \multicolumn{3}{c|}{\textbf{Female}} \tabularnewline
			\cline{3-14}
			& &  $RMSE_{(m)}$ & $PICP_{(m)}$ & $MPIW_{(m)}$ & $RMSE_{(m)}$ & $PICP_{(m)}$ & $MPIW_{(m)}$ & $RMSE_{(m)}$ & $PICP_{(m)}$ & $MPIW_{(m)}$ & $RMSE_{(m)}$ & $PICP_{(m)}$ & $MPIW_{(m)}$ \tabularnewline
			\hline	
			\multirow{2}{*}{\textbf{\textit{Australia}}} & \multirow{1}{*}{LC} & 0.227 & $\bm{1}$  & 	$\bm{0.534}$ &  $\bm{0.091}$ & 	0.944 & 0.267  & 0.175	& $\bm{1}$ & 	$\bm{0.478}$ & $\bm{0.084}$	 &	0.944 & 0.265 \tabularnewline 
			& \multirow{1}{*}{LC-LSTM} & $\bm{0.110}$ & 0.944   &  0.295 & 0.142  & 	0.944 	& $\bm{0.407}$ &  $\bm{0.116}$ & 0.944  &  0.280 &  0.097 & $\bm{1}$  & $\bm{0.394}$ \tabularnewline 
			\hline		
			\multirow{2}{*}{\textbf{\textit{Japan}}} & \multirow{1}{*}{LC} &  0.071	&  0.667 &	$\bm{0.180}$  & 0.255 & 0 & 	0.173 & $\bm{0.063}$ & 0.722 & 0.150 &	0.155  & 0.056  & 	0.105 \tabularnewline 
			& \multirow{1}{*}{LC-LSTM} & $\bm{0.062}$ &  $\bm{0.722}$ &  	0.143  & $\bm{0.077}$ &  $\bm{0.444}$	& $\bm{0.254}$  & 0.073 &  \textbf{0.944} & 	\textbf{0.243}  &  $\bm{0.061}$  & $\bm{0.667}$  &  $\bm{0.115}$ \tabularnewline 
			\hline
			\multirow{2}{*}{\textbf{\textit{Spain}}} & \multirow{1}{*}{LC} & 	0.200 &  0.333 & 0.153 & $\bm{0.104}$   & 	0.611 & 0.179 & 0.228 & $\bm{0.333}$ &	0.136 & $\bm{0.067}$ 	 & 0.722	&	0.174 \tabularnewline 
			& \multirow{1}{*}{LC-LSTM} & $\bm{0.161}$ & $\bm{0.556}$  &  $\bm{0.276}$  &  0.502  &  $\bm{0.944}$	& $\bm{0.489}$  & 	$\bm{0.205}$  & 0.278  &  $\bm{0.215}$  & 0.073  & $\bm{0.944}$  &  $\bm{0.259}$ \tabularnewline 
			\hline 
			\multicolumn{1}{c}{} & \multicolumn{1}{c}{} & \multicolumn{11}{c}{} & \multicolumn{1}{c}{ } \tabularnewline
			\multicolumn{1}{c}{} & \multicolumn{1}{c}{} & \multicolumn{11}{c}{} & \multicolumn{1}{c}{ } \tabularnewline
			
			
			\multicolumn{1}{c}{} & \multicolumn{1}{c}{} & \multicolumn{11}{c}{\LARGE$\bm{x=65}$} & \multicolumn{1}{c}{ } \tabularnewline
			\cline{1-14}
			\multirow{3}{*}{\textbf{Country}} & \multirow{3}{*}{\textbf{Model}} & \multicolumn{6}{c|}{\textbf{Training period 1950-2000}} & \multicolumn{6}{c|}{\textbf{Training period 1960-2000}} \tabularnewline
			\cline{3-14}
			& & \multicolumn{3}{c|}{\textbf{Male}} & \multicolumn{3}{c|}{\textbf{Female}} & \multicolumn{3}{c|}{\textbf{Male}} & \multicolumn{3}{c|}{\textbf{Female}} \tabularnewline
			\cline{3-14}
			& &  $RMSE_{(m)}$ & $PICP_{(m)}$ & $MPIW_{(m)}$ & $RMSE_{(m)}$ & $PICP_{(m)}$ & $MPIW_{(m)}$ & $RMSE_{(m)}$ & $PICP_{(m)}$ & $MPIW_{(m)}$ & $RMSE_{(m)}$ & $PICP_{(m)}$ & $MPIW_{(m)}$ \tabularnewline
			\hline	
			\multirow{2}{*}{\textbf{\textit{Australia}}} & \multirow{1}{*}{LC} & 0.157 &  	1 & $\bm{0.672}$  &  0.061 & 0.944 & 0.283 & 0.106 & 1 & \textbf{0.623} & 0.058 &	1 & 	0.293 \tabularnewline 
			& \multirow{1}{*}{LC-LSTM} & $\bm{0.056}$ & 1 & 	0.371	&  	0.061  &  $\bm{1}$	& $\bm{0.431}$ & 	$\bm{0.043}$  & 1  & 0.365  & \textbf{0.052}  & 1  & \textbf{0.436}  \tabularnewline 
			\hline		
			\multirow{2}{*}{\textbf{\textit{Japan}}} & \multirow{1}{*}{LC} & 0.054 &  $\bm{1}$ & $\bm{0.177}$ &  0.160 & 0.444 & 0.178 & 0.063  & 0.833 & 0.161 & 	0.151  & 0.333 & 	0.128 \tabularnewline 
			& \multirow{1}{*}{LC-LSTM} & $\bm{0.035}$ & 0.944  &  0.141 & \textbf{0.077}  &  \textbf{1}	&  	\textbf{0.262} & \textbf{0.029}  & \textbf{1}	 & \textbf{0.261}  & 	\textbf{0.028}  & \textbf{1}  &  \textbf{0.141} \tabularnewline 
			\hline
			\multirow{2}{*}{\textbf{\textit{Spain}}} & \multirow{1}{*}{LC} & 0.097 & 0.278  & 0.157 &  	0.079 & 	0.778  & 0.206 & 0.106 & 0.222 &  0.158 & 0.073 & 0.889	& 0.229 \tabularnewline 
			& \multirow{1}{*}{LC-LSTM} & $\bm{0.060}$ &  $\bm{1}$  & $\bm{0.285}$  &  \textbf{0.66}  &  \textbf{1}	& \textbf{ 0.568} &  \textbf{0.080} &  \textbf{0.889} & \textbf{0.249}  &  \textbf{0.068} &  \textbf{0.944} & \textbf{0.340}  \tabularnewline 
			\hline
			\multicolumn{1}{c}{} & \multicolumn{1}{c}{} & \multicolumn{11}{c}{} & \multicolumn{1}{c}{ } \tabularnewline
			\multicolumn{1}{c}{} & \multicolumn{1}{c}{} & \multicolumn{11}{c}{} & \multicolumn{1}{c}{ } \tabularnewline
			
			\multicolumn{1}{c}{} & \multicolumn{1}{c}{} & \multicolumn{11}{c}{\LARGE$\bm{x=85}$} & \multicolumn{1}{c}{ } \tabularnewline
			\cline{1-14}
			\multirow{3}{*}{\textbf{Country}} & \multirow{3}{*}{\textbf{Model}} & \multicolumn{6}{c|}{\textbf{Training period 1950-2000}} & \multicolumn{6}{c|}{\textbf{Training period 1960-2000}} \tabularnewline
			\cline{3-14}
			& & \multicolumn{3}{c|}{\textbf{Male}} & \multicolumn{3}{c|}{\textbf{Female}} & \multicolumn{3}{c|}{\textbf{Male}} & \multicolumn{3}{c|}{\textbf{Female}} \tabularnewline
			\cline{3-14}
			& &  $RMSE_{(m)}$ & $PICP_{(m)}$ & $MPIW_{(m)}$ & $RMSE_{(m)}$ & $PICP_{(m)}$ & $MPIW_{(m)}$ & $RMSE_{(m)}$ & $PICP_{(m)}$ & $MPIW_{(m)}$ & $RMSE_{(m)}$ & $PICP_{(m)}$ & $MPIW_{(m)}$ \tabularnewline
			\hline	
			\multirow{2}{*}{\textbf{\textit{Australia}}} & \multirow{1}{*}{LC} & \textbf{0.053} & 0.944   & \textbf{0.344} &  \textbf{0.032} & 1 & 	0.191  & \textbf{0.039} & 0.944 & \textbf{0.319} & 0.033 &	1 & 	0.194 \tabularnewline 
			& \multirow{1}{*}{LC-LSTM} & 0.056 & 0.944   &  0.190  &  0.033  &  1	&  \textbf{0.292} & 0.049 &  0.944 & 0.187  & \textbf{0.026} & 1  & \textbf{0.289}  \tabularnewline 
			\hline		
			\multirow{2}{*}{\textbf{\textit{Japan}}} & \multirow{1}{*}{LC} & 	\textbf{0.030} &  	\textbf{0.889} & \textbf{0.134} & 	\textbf{0.050}  & \textbf{0.778} &  	0.142 & 	0.040 & 0.944 & 0.133 & \textbf{0.071} & 0.444	& 	0.115 \tabularnewline 
			& \multirow{1}{*}{LC-LSTM} & 0.034 &  0.778 &  0.107 & 	0.171   & 0.500 	&  \textbf{0.209} & \textbf{0.029} & 0.944  &  \textbf{0.215} & 	0.080  & 0.444  &  \textbf{0.126} \tabularnewline 
			\hline
			\multirow{2}{*}{\textbf{\textit{Spain}}} & \multirow{1}{*}{LC} & 0.082 &  0.333 & 0.113 &  \textbf{0.059} & 0.611 &	0.122 & 0.086 & 0.278 &  0.116  & 0.057 & 0.833	& 0.150 \tabularnewline 
			& \multirow{1}{*}{LC-LSTM} & \textbf{0.052} &  \textbf{1} & \textbf{0.204} & 0.447   &  \textbf{1}	&  \textbf{0.335}  & \textbf{0.066}  &  \textbf{0.944}	 & \textbf{0.183}  &  \textbf{0.048} & \textbf{1}  & \textbf{ 0.223} \tabularnewline 
			\hline
		\end{tabular}
	\end{adjustbox}
\end{table}

Finally, we appreciate the LC model performances in uncertainty estimation for the Australian males. It is worth notes the LC greater probability coverage, as well as its interval width. Nevertheless, the latter hints some questions about the LC predictions suitability on the long-run. For instance, \autoref{fig:pi_males_long} displays a 50-years prediction for the Australian males aged 65, for both training periods. 

\begin{figure}[H]
	\centering
	\caption{Australian Males. PI ($\alpha=5\%$) for $x=65$. Training period: 1950-2000 (left), 1960-2000 (right). Forecasting period: 2001-2050. \label{fig:pi_males_long}}
	\includegraphics[width=0.45\linewidth,height=0.25\textheight]{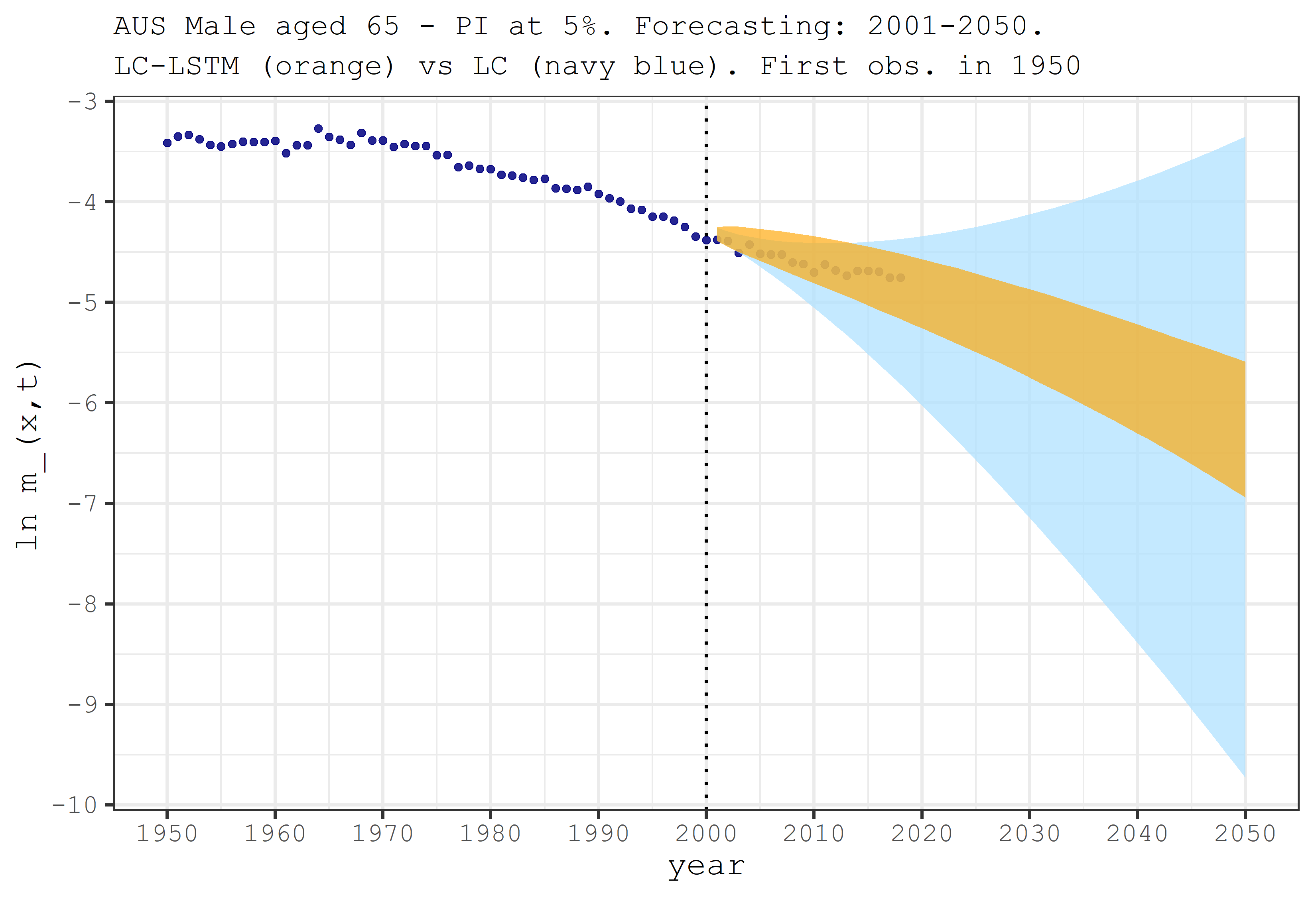}  \quad
	\includegraphics[width=0.45\linewidth,height=0.25\textheight]{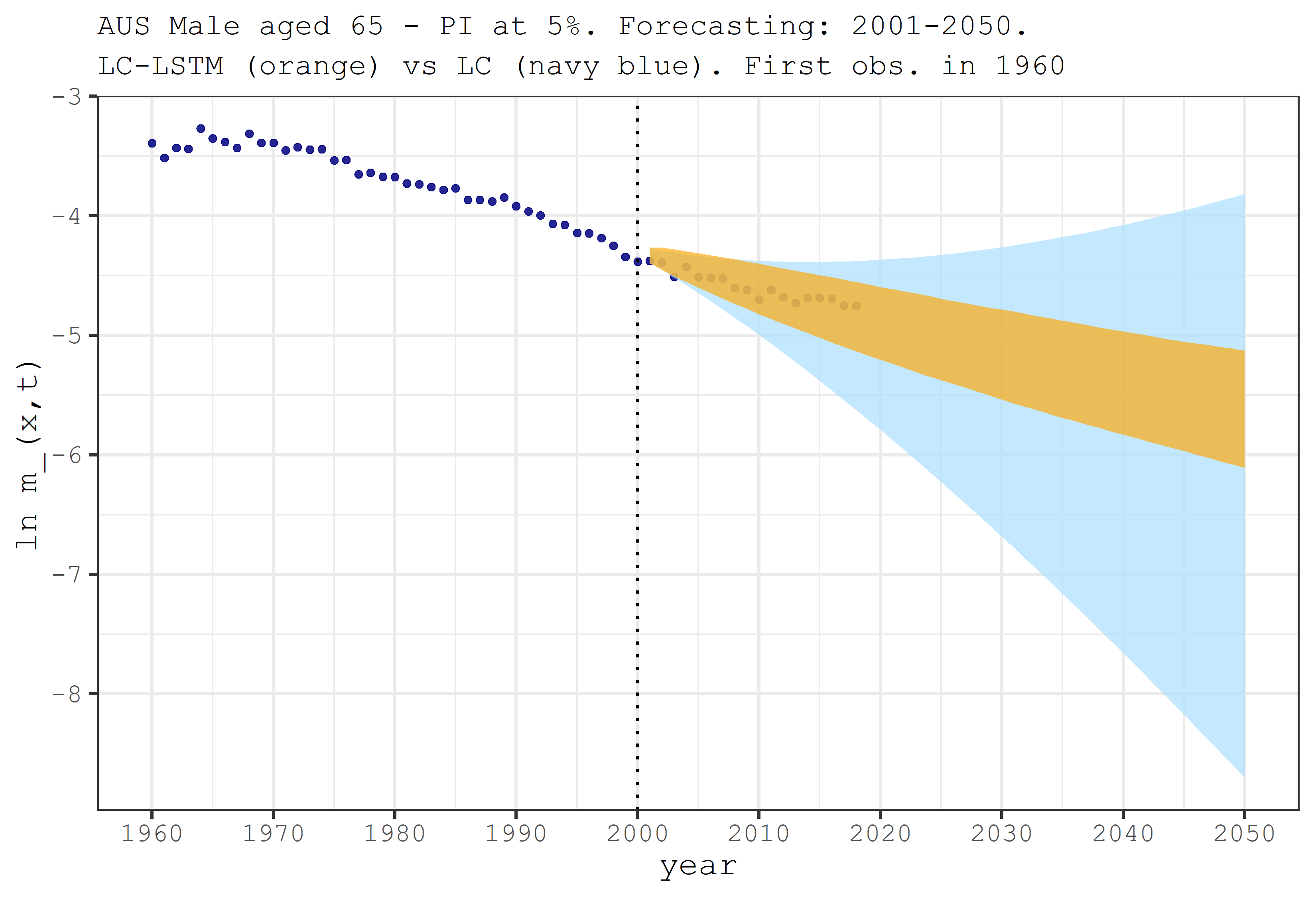} \\
\end{figure}

Given the observed mortality up to the forecasting year, the LC model seems to propose uncertainty levels not consistent with the historical mortality dynamic. Looking at the training period 1960-2000, we observe an overall death rates reduction about 61\%. In the following 40 years of projection, the LC model establishes a further reduction in death rates around 96\%, in the case of the PI lower bound, or a possible increase of 68\%, considering the PI upper bound. Considering the training period 1950-2000, these evidences are strengthened. Referring to the LC-LSTM model, the mortality estimates assume greater consistency with historical observations. In particular, the LC-LSTM produces a 40-year decrease in mortality between 82\%, considering the PI lower bound, and 46\% according to the PI upper bound.

Moreover, inspecting \autoref{fig:pi_males_long} we stress how the learning period length impacts on the long-run network forecasts. As aforementioned, the two learning periods considered show different accelerations in mortality decline. Fitting the LSTM model on the interval 1960-2000, the network learns the fundamental linear decrease of mortality such that a coherent PI shape is predicted over the forecasting horizon. As opposite, the interval 1950-2000 points up a non-linear behaviour due to the longevity accelerations in the period 1950-1960. In this case, the LSTM is able in extrapolating a coherent mortality range with the historical observation, allowing for biological plausibility but believing in a more marked increase in longevity. In light of this, we do not question the robustness of the model, rather we emphasize its ability to extrapolate the fundamental pattern from the observed data. The selection of the historical sample to fit the mortality model depends on the aware modeller expert judgment, given  the population under investigation. As necessary support, the modeller must be consider adequately criteria to approve its period choices. As suggested by \cite{Cairns2011}, is crucial evaluate qualitative ex ante criteria, such as biological reasonableness, plausibility of predicted levels of uncertainty and model robustness. At same time, ex post quantitative criteria, such as performance metrics in Section \ref{sec:metriche}, are indispensable to address forecasts in a backtesting exercise (see for instance \cite{dowd}). 
Following both qualitative and quantitative criteria, overall our analyses demonstrate how both models are biologically regular in projecting mortality. The discerning factor between the two models is the plausibility about foreseen uncertainty levels, especially for long-term forecasts. Hence, the proposed model integration favours the predictive improvements in the LC framework, as well as proven in most cases by performance indicators. Finally, we suggest the interval 1960-2000 as the most suitable training period for the LSTM calibration. In fact, it is plausible to believe that the reduction in mortality will continue to occur in a fair linear way over time and at different ages, properly reflecting the demographic trend observed since the 1960s.              

\bigskip

\section{Conclusions}
\label{sec:conclusioni}
Nowadays, the mortality forecast is still a major challenge for actuaries and demographers. Obviously, different populations might show diverse mortality scenarios, and a performing mortality model for a population might be not adequate on another one. Not surprisingly, the collection of mortality models in literature is far-reaching. Recently, new methodological advances in mortality forecasting has been proposed, grounding on machine and deep learning techniques, mainly based on Neural Network models. The latter have established a forecasting approach with high predictive accuracy in point estimates. However, to the best of our knowledge, machine and deep learning literature in mortality forecasting lack for studies about uncertainty estimation. In light of this, the present work formalizes a deep learning integration of the LC model, both in terms of point prediction and prediction interval.   
%
Our proposal allows, at the same time, to represent the mortality surface through a canonical age-period model and to predict the future mortality realizations extrapolating the temporal mortality dynamic from data. The resulting LC-LSTM model poses a compromise between interpretation of the mortality phenomenon and high precision in anticipating its future realizations. Moreover, exploiting both the NN ensemble paradigm and noise analysis, we are able to produce a mortality density forecast. 
From our empirical investigation, we highlight the LC-LSTM capacity to produce forecasts both biologically consistent and plausible in uncertainty levels w.r.t. the historical observations, also in the long-run. The latter features is crucial in actuarial assessments, as in the evaluation of annuities products or to appraise the pension system sustainability.
Therefore, our proposal establish a reliable improvement of the LC model in term of predictive prowess, posing an innovative approach within mortality literature. The proposed framework might represent a prominent practice in the field of longevity forecasting, also for actuarial business tasks.

\newpage

\appendix

\section*{Appendix A: Forecasting mortality with the LC model}
\label{sec:appendix_A}
We recall the fundamental forecasting equations of the LC model. According to the predictor in eq.(\ref{eq:predictor_LC}), over the forecast horizon $\mathcal{T'}$ the terms $k_{t}$ are usually modeled by a random walk. Generalizing, we consider an ARIMA(p,d,q) process, then the realizations of $k_{t}$ over $\mathcal{T'}$ originate from the following equation:
\begin{equation}
\label{eq:kt_arima}
\bigtriangledown^dk_{t_n+h}=h\delta+\sum_{i=1}^{p}\phi_i\bigtriangledown^dk_{(t_n+h)-i}+\sum_{j=1}^{q}\theta_j\epsilon_{(t_n+h)-j}+\sum_{k=1}^{h}\epsilon_{t_n+k}, \quad h=1,\ldots,s \\
\end{equation}
where the sum of the errors are normally distributed, that is $\sum_{k=1}^{h}\epsilon_{t_n+k} \sim \mathcal{N}\left(0,h^2\sigma^2_{\epsilon}\right)$. In this case, the LC model provides the following point predictions, for all $t \in \mathcal{T'} $:
\begin{equation}
\label{eq:point_kt_arima}
\log \hat{m}_{x,t}=\mathbb{E}\left(\log m_{x,t}\right) = \hat{\alpha}_x+\hat{\beta}_x \left(h\delta+\sum_{i=1}^{p}\phi_i\bigtriangledown^dk_{(t_n+h)-i}+\sum_{j=1}^{q}\theta_j\epsilon_{(t_n+h)-j}\right).\\
\end{equation}
Since errors are gaussian, the prediction interval at a fixed confidence level $\alpha \in (0,1)$ is:
\begin{equation}	
\label{eq:interval_kt_arima}
\log \hat{m}_{x,t} \pm \hat{\beta}_x\sqrt{h}\sigma_{\epsilon} z_{\frac{\alpha}{2}}. \\
\end{equation}
where $z_{\alpha}$ is the $\alpha$-quantile of a Standard Normal distribution.

\newpage

\section*{Appendix B: Statistical tests to check the noise randomness and normality}
\label{test_norm}

\begin{table}[H]
	\centering
	\renewcommand*{\arraystretch}{1.1}
	\caption{Statistical tests for noise in the training set. Males.}
	\begin{adjustbox}{max width=\textwidth}
	\begin{tabular}{|c|c|cc|cc|}
		\hline
		\multirow{2}{*}{\textbf{Country}} & \multirow{2}{*}{\textbf{Test}} & \multicolumn{2}{c|}{\textbf{Training period 1950-2000}} & \multicolumn{2}{c|}{\textbf{Training period 1960-2000}} \tabularnewline
		\cline{3-6}
		& &  Statistics value & p-value & Statistics value & p-value \tabularnewline
		\hline	
		\multirow{4}{*}{\textbf{\textit{Australia}}} & \multirow{1}{*}{Shapiro-Wilk} &0.96352 & 0.12489$^{\star \star \star }$&  0.98379 & 0.82539$^{\star\star\star}$ \tabularnewline 
		& \multirow{1}{*}{D'Agostino-Pearson} &1.62692 & 0.44332$^{\star\star\star}$& 0.85534 & 0.65203$^{\star\star\star}$ \tabularnewline 
		& \multirow{1}{*}{Jarque-Bera} & 1.55177 & 0.46030$^{\star\star\star}$ & 0.64381 & 0.72477$^{\star\star\star}$ \tabularnewline 
		& \multirow{1}{*}{ADF} & -3.05447 & 0.15132$^{\star\star\star}$ & -2.58739 & 0.34294$^{\star\star\star}$ \tabularnewline 
		\hline		
		\multirow{4}{*}{\textbf{\textit{Japan}}} & \multirow{1}{*}{Shapiro-Wilk} & 0.96193 & 0.10710$^{\star\star\star}$  & 0.97511 & 0.51356$^{\star\star\star}$  \tabularnewline 
		& \multirow{1}{*}{D'Agostino-Pearson} & 8.05556 & 0.01781$^{\star}$ & 1.45996 & 0.48192$^{\star\star\star}$ \tabularnewline 
		& \multirow{1}{*}{Jarque-Bera} & 7.35771 & 0.02525$^{\star}$  &  1.20406 & 0.54770$^{\star\star\star}$ \tabularnewline 
		& \multirow{1}{*}{ADF} & -3.49574 & 0.05128$^{\star\star}$  & -2.73088 & 0.28662$^{\star\star\star}$ \tabularnewline 
		\hline	
		\multirow{4}{*}{\textbf{\textit{Spain}}} & \multirow{1}{*}{Shapiro-Wilk} & 0.97654 & 0.41696$^{\star\star\star}$ & 0.95790 & 0.14191$^{\star\star\star}$  \tabularnewline 
		& \multirow{1}{*}{D'Agostino-Pearson} & 1.83229 & 0.40006$^{\star\star\star}$ & 2.82652 & 0.24335$^{\star\star\star}$ \tabularnewline 
		& \multirow{1}{*}{Jarque-Bera} & 1.05350 & 0.59052$^{\star\star\star}$ & 2.31446 & 0.31436$^{\star\star\star}$ \tabularnewline 
		& \multirow{1}{*}{ADF} & -7.55942 & 0.01000  & -4.11879 & 0.01516$^{\star}$ \tabularnewline 
		\hline
	\end{tabular}
	\end{adjustbox}
\small{P-value significance level: $> 0.01 ^{ \star}$, $> 0.05 ^{\star \star}$, $ > 0.1 ^{\star \star \star} $.}
\end{table}
\begin{table}[H]
	\centering
	\renewcommand*{\arraystretch}{1.1}
	\caption{Statistical tests for noise in the training set. Females.}
	\begin{adjustbox}{max width=\textwidth}
	\begin{tabular}{|c|c|cc|cc|}
		\hline
        \multirow{2}{*}{\textbf{Country}} & \multirow{2}{*}{\textbf{Test}} & \multicolumn{2}{c|}{\textbf{Training period 1950-2000}} & \multicolumn{2}{c|}{\textbf{Training period 1960-2000}} \tabularnewline
		\cline{3-6}
		& &  Statistics value & p-value & Statistics value & p-value \tabularnewline
		\hline
		\multirow{4}{*}{\textbf{\textit{Australia}}} & \multirow{1}{*}{Shapiro-Wilk} & 0.96907 & 0.21209$^{\star\star\star}$  & 0.96724 & 0.29319$^{\star\star\star}$ \tabularnewline 
		& \multirow{1}{*}{D'Agostino-Pearson} & 2.52531 & 0.28290$^{\star\star\star}$ & 0.78319 & 0.67598$^{\star\star\star}$ \tabularnewline 
		& \multirow{1}{*}{Jarque-Bera} &1.78204  & 0.41024$^{\star\star\star}$ & 0.60740 & 0.73808$^{\star\star\star}$ \tabularnewline 
		& \multirow{1}{*}{ADF} & -3.07190 & 0.14432$^{\star\star\star}$ & -2.50033 & 0.37711$^{\star\star\star}$ \tabularnewline 
		\hline
		\multirow{4}{*}{\textbf{\textit{Japan}}} & \multirow{1}{*}{Shapiro-Wilk} & 0.97452 & 0.34985$^{\star\star\star}$   & 0.98888 & 0.95815$^{\star\star\star}$   \tabularnewline 
		& \multirow{1}{*}{D'Agostino-Pearson} & 3.12195 & 0.20993 $^{\star\star\star}$ & 0.79814 & 0.67094$^{\star\star\star}$  \tabularnewline 
		& \multirow{1}{*}{Jarque-Bera} &2.09605 & 0.35063$^{\star\star\star}$&  0.62112 & 0.73303$^{\star\star\star}$ \tabularnewline 
		& \multirow{1}{*}{ADF} & -5.14239 & 0.01000  & -3.89596 & 0.02383$^{\star}$ \tabularnewline 
		\hline
		\multirow{4}{*}{\textbf{\textit{Spain}}} & \multirow{1}{*}{Shapiro-Wilk} & 0.93640 & 0.02619$^{\star}$  & 0.97970 & 0.67844$^{\star\star\star}$ \tabularnewline 
		& \multirow{1}{*}{D'Agostino-Pearson} & 8.69754 & 0.01292$^{\star}$  & 1.74855 & 0.41716$^{\star\star\star}$  \tabularnewline 
		& \multirow{1}{*}{Jarque-Bera} & 7.56206 & 0.02280$^{\star}$   & 1.20753 & 0.54675$^{\star\star\star}$ \tabularnewline 
		& \multirow{1}{*}{ADF} &  -5.80177 & 0.01000 & -3.46488 & 0.06172$^{\star\star\star}$ \tabularnewline 
		\hline
		\end{tabular}
	\end{adjustbox}
\small{P-value significance level: $> 0.01 ^{ \star}$, $> 0.05 ^{\star \star}$, $ > 0.1 ^{\star \star \star} $.}
\end{table}

\newpage


\end{document}